\titleformat*{\section}{\bfseries}
\titleformat{\subsection}[runin]{\bfseries\itshape}{\thesubsection.}{3pt}{\space}[.]
\titleformat{\subsubsection}[runin]{\itshape}{\thesubsection.}{3pt}{\space}[.]
\newcommand{\forw}[1]{\cF_{ #1}}
\newcommand{\backw}[1]{\cB_{#1}}
\DeclareMathSymbol{\shortminus}{\mathbin}{AMSa}{"39}
\newcommand{\veryshortarrow}[1][3pt]{\mathrel{%
   \hbox{\rule[\dimexpr\fontdimen22\textfont2-.2pt\relax]{#1}{.4pt}}%
   \mkern-4mu\hbox{\usefont{U}{lasy}{m}{n}\symbol{41}}}}
\newcommand*\bigcdot{\mathpalette\bigcdot@{.5}}
\newcommand*\bigcdot@[2]{\mathbin{\vcenter{\hbox{\scalebox{#2}{$\m@th#1\bullet$}}}}}
\newcommand{\pf}{\veryshortarrow}
\tikzset{ed/.style={auto,inner sep=2pt,font=\scriptsize}} %
\tikzset{>=stealth}
\tikzset{vert/.style={draw,circle, minimum size=6mm, inner sep=0pt, fill=white}}
\tikzset{vertblank/.style={ minimum size=6mm, inner sep=0pt, fill=white}}
\tikzset{vertbig/.style={draw,circle, minimum size=8mm, inner sep=0pt, fill=white}}
\tikzset{->-/.style={decoration={
      markings,
      mark=at position #1 with {\arrow{>}}},postaction={decorate}}}
\tikzset{edge/.style={line width=0.5pt, decoration={markings,mark=at position 1 with %
    {\arrow[scale=1.5,>=stealth]{>}}},postaction={decorate}}}
\tikzset{dotted/.style={black!30, line width=0.5pt}}
\tikzstyle{morphism}=[fill=white, draw=black, shape=rectangle]
\tikzstyle{medium box}=[fill=white, draw=black, shape=rectangle, minimum width=0.8cm, minimum height=0.9cm]
\tikzstyle{large morphism}=[fill=white, draw=black, shape=rectangle, minimum width=1.7cm, minimum height=1cm]
\tikzstyle{bn}=[fill=black, draw=black, shape=circle, inner sep=1.5pt]
\tikzstyle{effect}=[fill=white, draw=black, regular polygon, regular polygon sides=3, minimum width=0.4cm, inner sep=0pt]
\tikzstyle{state}=[fill=white, draw=black, regular polygon, regular polygon sides=3, minimum width=0.4cm, shape border rotate=180, inner sep=0pt]
\tikzstyle{medium state}=[fill=white, draw=black, regular polygon, regular polygon sides=3, minimum width=1.3cm, inner sep=0pt, shape border rotate=180]
\tikzstyle{large state}=[fill=white, draw=black, regular polygon, regular polygon sides=3, minimum width=2.2cm, shape border rotate=180, inner sep=0pt]
\tikzstyle{wn}=[fill=white, draw=black, shape=circle, inner sep=1.5pt]
\tikzstyle{arrow}=[->]
\tikzstyle{dashed box}=[-, dashed]
\tikzset{none/.style={%
     append after command={%
       \pgfextra{\node [right] at (\tikzlastnode.mid east) {{\tiny\tikzlastnode}};}
     }}}
\tikzstyle{none}=[]
\newcounter{snippet}
\newenvironment{snippet}{\opt{long}{}\VerbatimEnvironment\begin{Verbatim}}{\end{Verbatim}\addtocounter{snippet}{1}\color{black}}
\newenvironment{snippet}{
    \VerbatimEnvironment\begin{minted}[frame=lines,framesep=2mm,fontsize=\footnotesize, xleftmargin=0.5em, mathescape, linenos, label=Snippet \arabic{snippet}, labelposition=bottomline, escapeinside=||, mathescape=true]{julia}}{\end{minted}\addtocounter{snippet}{1}}
\DeclareSymbolFont{bbsymbol}{U}{bbold}{m}{n}
\DeclareMathSymbol{\bbsemi}{\mathbin}{bbsymbol}{"3B}
\DeclareMathSymbol{\bbcomma}{\mathbin}{bbsymbol}{"2C}
\newcommand{\comp}{}
\newcommand{\Dup}{{\large\lhd}}%triangleleft}
\newcommand{\dd}{{\,\mathrm d}}
\renewcommand{\th}{\theta}
\renewcommand{\phi}{\varphi}
\renewcommand{\scr}[1]{{\mathcal #1}}
\newcommand{\argmin}{\operatornamewithlimits{argmin}}
\newcommand{\EE}{\mathbb{E}}
\newcommand{\PP}{\mathbb{P}}
\newcommand{\ind}{\mathbf{1}}
\newcommand{\RR}{\mathbb{R}}
\newcommand{\bem}{\begin{bmatrix}}
\newcommand{\enm}{\end{bmatrix}}
\newcommand{\bs}[1]{{\boldsymbol #1}}
\newcommand{\T}{{\prime}}
\newcommand{\id}{\operatorname{id}}
\renewcommand{\P}{\mathbb{P}} %
\newcommand{\Q}{\mathbb{Q}} %
\theoremstyle{definition}
\newtheorem{thm}{Theorem}[section]
\newtheorem{prop}[thm]{Proposition}
\newtheorem{lem}[thm]{Lemma}
\newtheorem{cor}[thm]{Corollary}
\newtheorem{rem}[thm]{Remark}
\newtheorem{ex}[thm]{Example}
\newtheorem{defn}[thm]{Definition}
\newcommand{\E}{\mathbb{E}}
\newcommand{\on}[1]{\operatorname{#1}}
\newcommand{\Bm}{\begin{bmatrix}}
\newcommand{\Em}{\end{bmatrix}}
\newcommand{\hedge}[3]{h_{#1\pf#2}(#3)}
\newcommand{\hedges}[2]{h_{\pf#1}(#2)}
\newcommand{\hvertex}[2]{h_{#1}(#2)}
\newcommand{\pedge}[4]{\kappa_{#1 \pf#2}(#3; #4)}
\newcommand{\pstaredge}[4]{\kappa^\star_{#1\pf#2}(#3; #4)}
\newcommand{\pcircedge}[4]{\kappa^\circ_{#1\pf#2}(#3; #4)}
\newcommand{\optic}[2]{\langle \cF_{#1} \mid \cB_{#2}\rangle}
\newcommand{\ev}{\operatorname{ev}}
\newcommand{\weight}[3]{w_{#1\pf#2}(#3)}
\newcommand{\pder}[2][]{\frac{\partial #1}{\partial #2}}
\def\lstAZ{A, B, C, D, E, F, G, H, I, J, K, L, M, N, O, P, Q, R, S, T, U, V, W, X, Y, Z}
\def\lstaz{a, b, c, d, e, f, g, h, i, j, k, l, m, n, o, p, q, r, s, t, u, v, w, x, y, z}
\def\lstAZBB{B, C, D, E, F, G, H, I, J, K, L, M, N, O, P, Q, R, T, U, V, W, X, Y, Z}
\newcommand{\MkScr}[1]{\expandafter\def\csname s#1\endcsname{\mathscr{#1}}}
\newcommand{\MkUp}[1]{\expandafter\def\csname u#1\endcsname{\mathrm{#1}}}
\newcommand{\MkBold}[1]{\expandafter\def\csname b#1\endcsname{\mathbf{#1}}}
\newcommand{\MkFrak}[1]{\expandafter\def\csname f#1\endcsname{\mathfrak{#1}}}
\newcommand{\MkCal}[1]{\expandafter\def\csname c#1\endcsname{\mathcal{#1}}}
\newcommand{\MkBB}[1]{\expandafter\def\csname #1#1\endcsname{\mathbb{#1}}}
\lstAZ\do{%
	\expandafter\MkScr \i  %
	\expandafter\MkFrak \i  %
	\expandafter\MkUp \i %
	\expandafter\MkBold \i %	
	\expandafter\MkCal \i  %
}    
\lstaz\do{%
	\expandafter\MkUp \i   }    
\lstAZBB\do{%
	\expandafter\MkBB \i     }
\DeclareFontFamily{U}{matha}{\hyphenchar\font45}
\DeclareFontShape{U}{matha}{m}{n}{
      <5> <6> <7> <8> <9> <10> gen * matha
      <10.95> matha10 <12> <14.4> <17.28> <20.74> <24.88> matha12
      }{}
\DeclareSymbolFont{matha}{U}{matha}{m}{n}
\DeclareMathSymbol{\varrightharpoonup}{3}{matha}{"E1}
\DeclareMathSymbol{\varleftharpoonup}{3}{matha}{"E0}
\newcommand{\before}{\leadsto}
\renewcommand{\tilde}{\widetilde}
\DeclareMathOperator{\pa}{pa}
\DeclareMathOperator{\anc}{anc}
\DeclareMathOperator{\ch}{ch}
\newsavebox\bsbcopier
\savebox\bsbcopier{%
  \begin{tikzpicture}[baseline=0pt,line width=0.5pt]
    \node[bn,scale=0.7] (a) at (0, 2.8pt) {};
    \draw (a) -- +(-180:.21);
    \draw (a) -- +(-45:.21);
    \draw (a) -- +(45:.21);
  \end{tikzpicture}} 
\newsavebox\bsbcopierb
\savebox\bsbcopierb{%
  \begin{tikzpicture}[baseline=0pt,line width=0.5pt]
    \node[bn,scale=0.7] (a) at (0, 2.8pt) {};
    \draw (a) -- +(0:.21);
    \draw (a) -- +(-225:.21);
    \draw (a) -- +(-135:.21);
  \end{tikzpicture}}
\newsavebox\bsinver
\savebox\bsinver{%
  \begin{tikzpicture}[baseline=0pt,line width=0.5pt]
    \node[wn,scale=0.7] (a) at (0, 2.8pt) {};
    \draw (a) -- +(0:.21);
    \draw (a) -- +(-180:.21);
  \end{tikzpicture}}
\DeclarePairedDelimiterXPP\expec[1]{\E}[]{}{

#1}
\renewcommand*{\@fnsymbol}[1]{\ensuremath{\ifcase#1\or *\or 
    \mathsection\or \mathparagraph\or \|\or **\or \dagger\dagger
    \or \ddagger\ddagger \else\@ctrerr\fi}}
\definecolor{darkgreen}{rgb}{0.1, 0.3, 0.23}
\date{}
\title{\bf Automatic Backward Filtering Forward Guiding for Markov processes and graphical models}
\author{\small \bf Frank van der Meulen\thanks{De Boelelaan 1111, 1081HV Amsterdam, The Netherlands }\\
\small Department of Mathematics \\
\small Vrije Universiteit Amsterdam\\
\small Email: f.h.van.der.meulen@vu.nl
\and \small \bf Moritz Schauer\thanks{ Chalmers Tvärgata 3, 41296 Göteborg, Sweden}\\
%\small Department of Mathematical Sciences\\
\small Chalmers University of Technology\\\small and University of Gothenburg\\
\small E-mail: smoritz@chalmers.se}
\begin{document}
\maketitle

%\tableofcontents

\begin{abstract} 
\parindent0pt
\parskip1ex

We incorporate discrete and continuous time Markov processes as building blocks into probabilistic graphical models with latent and observed variables.  We introduce the automatic Backward Filtering Forward Guiding (BFFG) paradigm (\cite{mider2021continuous}) for programmable inference on latent states and model parameters.
Our starting point is a generative model, a forward description of the probabilistic process dynamics.
We backpropagate the information provided by observations through the model to transform the generative (forward) model into a pre-conditional model guided by the data. It approximates the actual conditional model with known likelihood-ratio between the two.
The backward filter and the forward change of measure are suitable to be incorporated into a probabilistic programming context because they can be formulated as a set of transformation rules.

The guided generative model can be incorporated in different approaches to efficiently sample latent states  and parameters conditional on observations. We show applicability in a variety of settings, including Markov chains with discrete state space, interacting particle systems, 
state space models, branching diffusions and Gamma processes.

\noindent
 \emph{{Keywords:} Backward information filter, Bayesian network, Branching diffusion process, directed acyclic graph, 
 %category of optics, 
 guided process, %guided proposal, %
 Doob's $h$-transform, string diagram.}
 \end{abstract}

{\bf MSC 2020 subject classifications}: Primary 62H22, 62M20; secondary 60J05, 60J25.

% 62H22 Probabilistic graphical models
% 60J05 Discrete-time Markov processes on general state space
% 60J25 Continuous-time Markov processes on general state spaces
% 62M20 Inference from stochastic processes and predic-tion [See also 60G25]; filtering [See also 60G35,93E10, 93E11]
% 62F15 Bayesian inference

\numberwithin{equation}{section}
\sloppy

\section{Introduction}
\subsection{Problem description}

A large number  of scientific disciplines makes use of stochastic processes for probabilistic modelling of time evolving phenomena and complex systems under uncertainty.  In this setting data assimilation and statistical inference is challenging. More so, if this is to be incorporated as components into larger probabilistic models, to be described using modelling languages, and if it is desired to perform inference in an at least semi-automated fashion in probabilistic programming.  In this paper we provide a way to incorporate Markovian stochastic processes as building blocks into probabilistic graphical models and efficiently perform inference on them. 

Probabilistic graphical models capture the structure of a probabilistic model in the form of a graph from which conditional independence properties can easily be extracted. We assume that the dependence structure of random quantities  is represented by a directed acyclic graph (DAG) with vertices $t \in \cT$ and edges directed from a set of source vertices towards the leaves  (commonly used  terminology for graphical models is reviewed in appendix \ref{sec:notation}). 
To each vertex $t$ of the DAG corresponds a random quantity $X_t$, constituting  a Bayesian network. 
Often the DAG arises as the dependency structure of a probabilistic program.
The conditional distribution of $X_t$  given all  $X_s$ corresponding to the ancestors  $s$ of $t$ in the DAG is prescribed and assumed to depend only on the parents, the random quantity $X_{\pa(t)}$, the set of vertices $s$ for which there is a directed edge from $s$ to $t$.
See  Figure~\ref{fig:tree} for a simple example with a single source (root). 

Assume we only observe $X_v$ for each leaf vertex  $v$ of the DAG. Leaves can be attached anywhere in the DAG, also at the sources. Therefore edges ending in leaves can be conceptualised as observation schemes, possibly  representing random errors. The smoothing problem consists of finding the exact joint distribution of all $X_s$, where $s \in \cT$ is not a leaf, conditional on the observations. In this paper we give a novel way to sample from the smoothing distribution  thereby extending well known algorithms for graphical models.  Contrary to other methods, the proposed framework does not require modification of the DAG or dependency structure of the probabilistic program, for example by addition of edges.
We are especially interested in models where some $X_t$ are obtained as the endpoint of the evolution of a Markov process starting from the parent $X_s$ over a certain time span.  The transitions may depend on parameters in the Markov kernel  which need to be estimated. 

The problem setting encompasses training of stochastic residual neural networks \citep{wang2018resnets} and their continuous-time counterparts  neural stochastic differential equations \citep{chen2018neural}, as well as other settings where the output of a  complicated latent process is observed  and the latent process is obtained from iteration and composition of models which are tractable (only) locally in time, in space or along single edges in the DAG.

\subsection{Approach}

We start from a Markovian forward description of the possibly nonlinear,  non-reversible process dynamics. In essence, by first  filtering backwards an approximate process from the leaves towards the root and then performing a change of measure on the transitions of the generative (forward) model using the filtering information, we derive a generative model \emph{guided}  by the data. This model provides a structure preserving  approximation to the true model conditional on the  observations to sample from.  Notably, we also transform transitions which correspond to continuous time Markov processes using an exponential change of measure \citep{PalmowskiRolski2002}. 
This \emph{pre-conditional} model, which combines information from data and the model, can then be incorporated in different approaches to efficiently sample the actual smoothing distribution, for example Hamiltonian Monte Carlo, particle methods or variational inference. 
The setup allows to obtain a differentiable likelihood through a reparametrisation in a natural way.

As a familiar special case of the procedure we obtain the backwards filtering forward sampling algorithm for  linear state space models (in this case the guided approximation is exact and can be directly sampled as latent path given the observations). Also our previously introduced backward filtering forward guiding (BFFG) algorithm for discretely observed stochastic differential equations \citep{mider2021continuous} is a special case.  More examples, involving Gamma processes, continuous-time Markov chains and Poisson processes are also covered in this paper. Key to this is the insight that the dynamics of a conditioned Markov process or graphical model can be obtained by Doob's $h$-transform and that guided proposals (as defined in \cite{schauer2017guided})  result from an approximation to this transform.   The idea of using an approximate $h$-transform dates back to at least \cite{Glynn1989} in the setting of rare event simulation.  

The backward filter and the forward change of measure are suitable to be incorporated into probabilistic programming environments. 
We use arguments from category theory to derive a principle of compositionality for our method, that is a principle which ``describes and quantifies how complex things can be assembled out of simpler parts'', as the eponymous journal Compositionality defines it.  Very related properties of Bayesian updates are considered in \cite{smithe2020bayesian}.
By this it is enough to provide a library of transformation rules and tractable approximations for backward filtering each single transition, the 
elementary building blocks of a model, and an implementable composition rule to obtain guided processes for larger models.
Note that we speak of ``guided processes'' rather than ``guided proposals'' (used in earlier works), the motivation being that its applicability is not restricted to a proposal distribution in a Metropolis-Hastings algorithm.

We provide a library of some of such rules and approximations as Julia package \texttt{Mitosis.jl},\footnote{\url{https://github.com/mschauer/Mitosis.jl}.} following the spirit of the \texttt{ChainRules.jl} package, \cite{ChainRules.jl}, providing common differentiation rules to the different automatic differentiation packages in Julia ecosystem.

\subsection{Related work}
 
There is a vast literature on filtering and smoothing on graphical models and continuous-time Markov models. Much attention has been paid to state-space models, especially in particle filtering. The proposed framework in this paper is agnostic with respect to the choice of inferential method and fits within the class of message passing algorithms. In Section \ref{sec:related work} we provide a detailed overview of references to related work. The companion paper \cite{bffg_introduction} contains a non-technical introduction to this work, targeted at a broader  audience.

\subsection{Novelty and contribution}

Building up on successes with guided processes  \citep{schauer2017guided} in the specific, but challenging problem of likelihood-based inference for diffusion process, this work gives a new, holistic perspective on automatic probabilistic programming in general. 

Our approach stands out by being \emph{generally applicable} to both discrete time as continuous-time
processes, discrete and continuous state spaces,  \emph{agnostic} with respect to the choice of inferential method,
\emph{automatic}, being suitable for automatic transformation of probabilistic programs,
and \emph{compositional}, the transformation of a probabilistic program relies on the composition of the transformations of its elementary building blocks in a backward and a forward pass.

\subsection{Outline}

In Subsection \ref{subsec:probgraphcmodels} we explain the setting and some notation.  In Subsection \ref{subsec: doob} we 
  define Doob's $h$-transform for Markov processes indexed on  a tree. 
  In Subsection \ref{subsec:guidedproc} we introduce guided processes for directed trees and, more generally, Bayesian  networks. In Subsection \ref{sec:continuoustime} we explain how continuous time processes, evolving over an edge, can be incorporated. 
Subsequently,  the computational structure of Doob's $h$-transforms and guided processes is detailed in Section \ref{sec:compositionality} for the specific case of a line graph. Here, we explain how an efficient implementation is obtained by piecing together forward and backward maps.
For the more general setting of a directed acyclic graph we explain how the  compositional structure can be unpacked in Section \ref{sec:unpacking}.
In Section \ref{sec:comp_bffg} we prove compositionality results for the pairing of forward- and backward maps on a directed tree. Examples  of guided processes with tractable backward map are given in Section \ref{sec:examples}.  Examples for continuous-time processes are given in Section \ref{sec:examples_continous}.  Various ways in which the obtained results can be used for inference are discussed in Section \ref{sec:parestimation}.  

 Appendix \ref{sec:related work} contains a summary of related work.  
 Some notation for graphical models  is reviewed in Appendix \ref{sec:notation}. Appendix \ref{sec:proofs} contains some postponed proofs.  

\begin{figure}
\begin{center}
\includegraphics[scale=0.3]{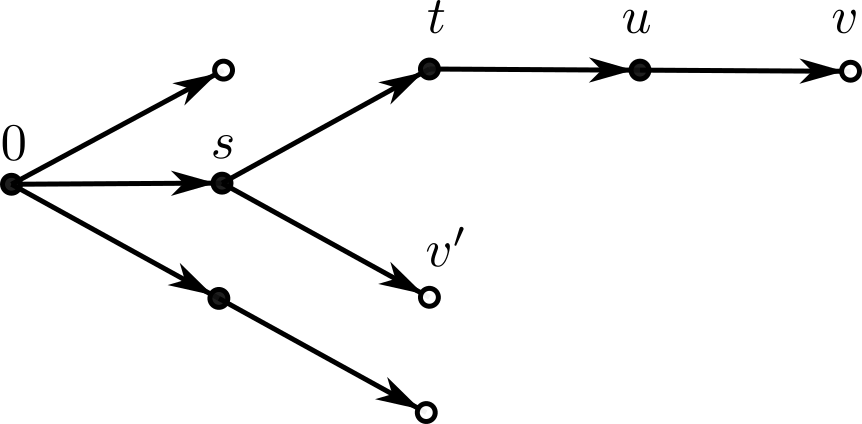}
\end{center}
\caption{A simple tree,  with $\bullet$ denoting  latent vertices and  $\circ$ leaf-vertices. The root is denoted $0$. Along each edge the process evolves according to either one step of a discrete-time Markov chain or a time-span of a continuous-time Markov process. }
\label{fig:tree}
\end{figure}

%%%%%
\section{Doob's $h$-transform and guided processes}\label{sec:graphmodels}

We first recap some elementary definitions on Markov kernels, as these are of key importance in all that follows. 
\subsection{Markov kernels}
Let $S=(E,\fB)$ and $S'=(E',\fB')$ be 
Borel measurable spaces. A Markov kernel between $S$ and $S'$ is denoted by $\kappa\colon S \rightarrowtriangle S'$  (note the special type of arrow), where $S$ is the ``source''  and $S'$ the ``target''. That is, $\kappa\colon E \times \fB' \to [0,1]$, where {\it (i)} for fixed $B\in \fB'$ the map $x\mapsto \kappa(x, B)$ is $\fB$-measurable and {\it (ii)} for fixed $x\in E$, the map $B\mapsto \kappa(x, B)$ is a probability measure on $S'$. 
On a measurable space $S$, denote the sets of bounded measures and bounded measurable functions (equipped with the supremum norm)  by $\cM(S)$ and $\mathbf{B}(S)$ respectively. 
The kernel $\kappa$ induces a {\it pushforward} of the measure $\mu$ on $S$ to $S'$   via 
\begin{equation}\label{eq: pushforward}
\mu \kappa(\cdot)  = \int_{E} \kappa(x, \cdot) \mu(\!\dd x), \qquad \mu \in \cM(S).
\end{equation}
The linear continuous operator $\kappa\colon \mathbf B(S') \rightarrow \mathbf B(S)$ is defined by 
\begin{equation}\label{eq: pullback}
\kappa h(\cdot) = \int_{E} h(y) \kappa(\cdot, \dd y),\qquad  h \in \mathbf{B}(S').
\end{equation} 
We will refer to this operation as the {\it pullback} of $h$ under the kernel $\kappa$. 
Markov kernels $\kappa_1\colon S_0 \to S_1 $ and $\kappa_2\colon S_1 \to S_2$  
can be composed  by the Chapman-Kolmogorov equations, here written as juxtaposition
\begin{equation}\label{eq:chapman}
(\kappa_1 \comp \kappa_2)(x_0, \cdot) = \int_{E_1}  \kappa_2(x_1, \cdot) \kappa_1(x_0, \dd x_1),\qquad x_0 \in E_0.
\end{equation}
The unit $\id$ for this composition is the  identity function considered as a Markov kernel, $\id(x, \dd y) = \delta_{x}(\!\dd y)$.
The product kernel $\kappa \otimes \kappa'$ on $S \otimes S'$
 is defined on the cylinder sets by $
(\kappa  \otimes \kappa') ((x, x'), B \times B') = \kappa (x, B) \kappa' ( x', B'), $  	(where $x\in E$, $x' \in E'$, $B \in \fB$, $B' \in \fB'$) 
and then extended to a kernel on the product measure space.

\subsection{Probabilistic graphical models}\label{subsec:probgraphcmodels}
Let $\cG = (\cT, \cE)$ be a directed, finite  graph without parallel edges and self-loops, with vertices  $\cT$ and edges $\cE$, where $(s, t)$ or $s\pf t$ denotes  a directed edge from $s$ to $t$, $s, t \in \cT$. Without first argument, $\pf\!t$ denotes the edge or the set of edges directed to $t \in \cT$.
We write $s \before t$  if  $s = t$ or there is a directed path from $s$ to $t$. 
We denote by $\pa(t)$, $\ch(t)$, $\anc(t)$ the parents, children and ancestors of a vertex $t$ respectively. The set of leaves (sinks) is denoted by $\cV$. For each vertex $t$, we denote by $\cV_t = \{v \in \cV, t \before v\}$ the leaves which are descendants of $t$ (including $t$, if $t$ is a leaf).  Terms and notations from graph theory used here (see e.g.\ chapter 10 in \cite{Murphy2012}) are formally defined in Appendix \ref{sec:notation}.

To simplify the presentation we will assume a designated root vertex $0 \notin \cT$ with deterministic $X_0 = x_0$ and formally define $0 = \pa(s)$ for sources, that is vertices $s \in \cT$ without parent(s) in $\cT$. In this sense, either $x_0$ represents a fixed or $X_{\ch(0)}$ a random starting configuration with possibly multiple source vertices (sources in $\cS$).

Finally, define  $\cS = \cT\setminus \cV$ (the set of non-leaf vertices) and $\pa(\cV)$ to be the set of vertices in $\cT$ that are parent of at least one leaf node and set $\cS_0 = \cS \cup \{0\}$.

{\bf Setting: } {\it We consider a probabilistic graphical model on $\cG$, a random process  $(X_{t},\, t \in \cT)$  such that $X_t$ depends on its ancestors $\anc(t)$ only through its parent(s) $\pa(t)$
 and where random variables $X_s, X_t$, $s \not\before t$, $t \not\before s$ are independent conditional on all of their parents.
We  assume that $X_t$ takes values in Borel spaces $S_t = (E_t, \fB_t)$ 
satisfying
\[
\kappa_{\pa(t)\pf t}(x; \dd y) = \P(X_t \in \dd y \mid X_{\pa(t)} = x) 
\]
for given Markov transition kernels $\kappa_{\pa(t)\pf t}$ 
 (where such notation here and in the following should be read as $\int_A \kappa_{\pa(t)\pf t}(x; \dd y) = \P(X_t \in \dd A \mid X_{\pa(t)} = x)$ for all measurable $A$.)  
 }
 For $v \in \cV$ we assume the existence of transition densities $p_{\pa(v)\pf v}$  with respect to a dominating measures $\lambda$
\begin{equation}\label{eq:density_leaves}	
 \kappa_{\pa(v)\pf v}(x; \dd y)  =p_{\pa(v)\pf v}(x; y) \lambda(\!\dd y),\qquad v\in \cV, \quad x \in E_{\pa(v)} 
\end{equation}
Let $\P^\star$ denote the law of $X_\cS$ conditional on $X_\cV=x_\cV$. By Bayes theorem
\begin{equation}\label{eq:evidence_}
 \frac{\dd \P^\star}{\dd \P}(X_\cS) = \frac{\prod_{v\in \cV} p_{\pa(v)\pf v}(X_{\pa(v)}, x_v)}{\ev(x_0)}. \end{equation}
where $\ev(x_0)$ is known as  the evidence. 
If also the remaining transition kernels have a transition density $p$ with respect to a measure $\lambda$, i.e.~$\kappa_{\pa(t)\pf t}(x; \dd y) = p_{\pa(t)\pf t}(x, y) \lambda(\dd y)$, the joint density factorises as
\begin{equation}
{p(x_\cT)} = %
\prod_{t \in \mathcal T}  p_{\pa(t)\pf t}(x_{\pa(t)}; x_t).
\end{equation}

\subsection{Doob's $h$-transform  on a directed tree}\label{subsec: doob}
Doob's $h$-transform characterises the conditional distribution of a Markov process given information at the end of a time horizon/life time in a succinct, but not always tractable form. It generalises to the setting where  $\cG$ is a connected tree directed to the leaves, i.e.\   each vertex has a unique parent, except from the formal root vertex $0$. Thus, edges point towards the leaves and ``backwards'' is to be understood as pointing from leaves to the root.  

We are interested in the distribution  of $X_{\cS}$ conditional on $X_{\cV}$, i.e.\ the smoothing distribution. The values at the leaf vertices can be interpreted as observations without error. Alternatively,   $X_{\pa(\cV)}$ can be thought of as observed quantities with the  edges $\pa(v)\pf v$, $v \in \cV$  representing observation errors. 
 Note that each vertex, including the root $0$, may have one or more leaves as its children.  A simple example is depicted in Figure~\ref{fig:tree}.

With this notation, the process $X$ that is conditioned on  its values on the leaves has transition kernels
\[ \P(X_s \in \dd x_s \mid X_{\pa(s)} = x_{\pa(s)}, X_{\cV_s}=x_{\cV_s}) = \kappa^\star_{s}( x_{\pa(s)}; \dd x_s),\qquad s\in \cS \]
characterised by
\begin{equation}\label{eq:kappastar}
\kappa^\star_{s}( x_{\pa(s)}; \dd x_s) = \frac{ \kappa(x_{\pa(s)}, \dd x_s) h_s(x_s)}{\int \kappa(x_{\pa(s)}, \dd x_s) h_s(x_s) }, 
\end{equation}

where the   term on $h$ on the right-hand-side is the $h$-transform at vertex $t$, i.e.\ $h_t(x_t) = p(x_{\cV_t} \mid x_t)$ in Bayesian notation. 
%As such, it relates to the marginal density of a backward filtered process, a relation deferred to Section \ref{sec:adjointprocess}. 
Within the subtree with root vertex $t$, this can be interpreted as the likelihood of $x_t$ with respect to the measure $\lambda$.  Alternatively, if we would imply a flat prior on $x_t$, then $h_t(x_t)$ would be proportional to the posterior density of $x_t$ (assuming $x\mapsto h_t(x)$ to be integrable) where $x_t$ is the parameter and $\cV_t$ the observation.  
Note that the observations are dropped from the notation as they are fixed throughout (the same convention is used for example in \cite{cappe2005springer}; Section 3.1.4 in there). In case of hidden Markov models, a rigorous proof that the conditioned latent process is an inhomogeneous Markov chain is given in Section 3.3 of \cite{cappe2005springer} and the same argument can adapted to formally justify \eqref{eq:kappastar}. 

It is well known how $h_s$ can be computed recursively  starting from the leaves back to the root. These recursive relations have  reappeared in many papers, see for  instance \cite{Felsenstein1981},  \cite{briers2010smoothing}, \cite{Guarniero2017} and \cite{Heng2020}. This recursive computation is known as the {\it Backward Information Filter (BIF)}. 
Firstly, for any leaf vertex $v$ we define
\begin{equation}\label{eq:hleaf} h_{\pa(v) \pf v} (x)  := 
p_{\pa(v) \pf v}(x; x_{v})   \quad v \in \cV .\end{equation}
For other vertices $s$ we define 
\begin{equation}\label{eq:hparent} 
h_{\pa(s) \pf s} :=\kappa_{\pa(s) \pf s} h_s , \quad s \in \cS, 
\end{equation}
where $ \kappa_{\pa(s) \pf s} h_s $ is the pullback (as defined in Equation \eqref{eq: pullback}) 
 of $h_s$ under the kernel  $\kappa_{\pa(s) \pf s}$.
By the Markov property we have 
\begin{equation}\label{eq:split_tochilds} \hvertex{s}{x} = \prod_{t\in \ch(s)} \hedge{s}{t}{x}, \qquad s\in \cS_0.
\end{equation}
This can be interpreted as \emph{fusion}, collecting all messages from children at vertex $s$, the messages being $x\mapsto h_{s\pf t}(x)$. Then the dynamics of the conditional process is given in short
\begin{equation}\label{eq:def_pstar} \pstaredge{\pa(s)}{s}{x}{\dd y} = \pedge{\pa(s)}{s}{x}{\dd y} \frac{\hvertex{s}{y}}{\hedges{s}{x}} \end{equation}
and $\hedges{s}{x}$ shows up as normalising constant and one could define it that way. 
Note that equivalently $\kappa^\star$ is characterised  by the relation
\begin{equation}\label{eq: kappastar}
\frac{\kappa_{\pa(s) \pf s} f h_s }{\kappa_{\pa(s) \pf s} h_s} = \kappa^\star_{\pa(s) \to s} f,
\end{equation}
for bounded measurable test functions $f \in \bB(S_s)$.
If $h$ is the Doob $h$-transform, then (on a directed tree) the evidence defined in \eqref{eq:evidence_} satisfies  $\ev(x_0)=\hvertex{0}{x_0}$.

Let $X^\star$ be the process $X$ on the vertex set of $\cG$, starting in $x_0$ and conditoned on observations at $\cV$.
As $\cG$ is a tree, $X^\star$ is Markov process on $\cG$ at well  
and evolving according to the transition kernels in \eqref{eq:def_pstar}.
%%%%%%%%%%%%%%%%%%%

\subsection{Guided processes}\label{subsec:guidedproc}

The $h$-transform can only be computed in closed form in few cases: it assumes the graphical model to be a directed tree and even there, in general, it is intractable and so is $\kappa^\star$. 

Suppose it is possible to work with an approximation $g$ of $h$,  for example by choosing a Gaussian approximation to $X$, and then computing the $h$-transform for that approximation instead. Then one can use this $g$ to define an \emph{approximate}   $h$-transform, to be used as pre-conditional model which then serves as approximation of the conditional process and can be transformed into the actual conditional process via a change of measure, that is through sampling. 

Similar to the $h$-transform we define the \emph{guided process} $X^\circ$  by transforming each transition along an edge of the DAG by applying a change of measure to the forward kernel by specifying functions $g_{s\pf t}$ on all edges. 
Note that on a DAG, contrary to a directed tree, conditioning on the values at leaf-vertices changes the dependency structure (there is a conditional process $X^\star$ defined on the vertex set, but its dependency graph is different from that of $X$, in particular there is
no meaningful transition $\kappa^\star_{\pa(s)\pf s}$ for all $s$).  The guided process, unlike the conditioned process, is constructed to have the same dependency structure as the (unconditional) forward process. Crucially this means a sampler for the guided process has a similar complexity as a sampler for  $X$, predisposing guided process for sampling based inference.

\begin{defn}\label{defn: guided process} Let the maps $x \mapsto g_{s\pf t}(x)$ be specified for each edge $(s,t)$ in $\cT$ and define 
\begin{equation}\label{eq:split_tochilds_g} g_s(x)= \prod_{t\in \ch(s)} g_{s\pf t}(x), \qquad s\in \cS_0. \end{equation}
 We define the guided process $X^\circ$   as the process starting in $X^\circ_0 = x_0$
 and from the roots onwards evolving  \emph{on} the DAG $\cG$ according to  transition kernel
\[
\pcircedge{\pa(s)}{s}{x_{\pa(s)}}{\!\dd y}  =\frac{g_s(y)\pedge{\pa(s)}{s}{x_{\pa(s)}}{\!\dd y}}{\int g_s(y)\pedge{\pa(s)}{s}{x_{\pa(s)}}{\!\dd y}  }, \qquad s\in \cS.  \] \end{defn} 

Samples of $X^\circ$ can be transformed into samples of the conditional process. 
If $\P^\circ$ denotes the law of $X^\circ_\cS$, then it follows from the definition of $\kappa^\circ$ 
\[  \frac{\dd \P^\circ}{\dd \P}(X_\cS) = \prod_{s\in \cS} \left( \frac{g_s(X_s)}{\int g_s(y)\pedge{\pa(s)}{s}{X_{\pa(s)}}{\!\dd y}}\right).  \]
Recall the definition of $\P^\star$ in \eqref{eq:evidence_}. 
\begin{thm}\label{thm:lr_xstar_xcirc_dag}
If $\P^\star \ll \P^\circ$, then for $f \in \bB(S_\cS)$
\[\E \left[ f(X_{\cS}) \mid X_\cV= x_\cV\right]
 =\frac{g_{0}(x_{0})}{\ev(x_0)} \E  \left[ f(X^\circ_{\cS})  \prod_{s\in \cS}  \weight{\pa(s)}{s}{X^\circ_{\pa(s)}}  \prod_{v\in \cV} \frac{ h_{\pa(v) \pf v}(X^\circ_{\pa(v)})}{g_{\pa(v)\pf v}(X^\circ_{\pa(v)})} \right],
\] with weights defined by  
\begin{equation}\label{eq:dagweights}
\weight{\pa(s)}{s}{x_{\pa(s)}} = \frac{\int   g_s(y)\pedge{\pa(s)}{s}{x_{\pa(s)}}{\!\dd y}  }{\prod_{u \in \pa(s)} g_{u\pf s}(x_u) }\qquad s\in \cS .
\end{equation}

\end{thm}
\begin{proof}
By definition of the weights \[  \frac{\dd \P^\circ}{\dd \P}(X_\cS) = \prod_{s\in \cS} \left( \weight{\pa(s)}{s}{X_{\pa(s)}}^{-1} \times  \frac{g_s(X_s)}{\prod_{u \in \pa(s)} g_{u\pf s}(X_u)}\right).  \]
By Equation \eqref{eq:split_tochilds}, the second term in brackets can be written as 
\[ \frac{ \prod_{s\in\cS}\prod_{t\in \ch(s)} g_{s\pf t}(X_s) }{ \prod_{s\in\cS}\prod_{u \in \pa(s)} g_{u\pf s}(X_u)} = \frac{\prod_{v\in \cV} g_{\pa(v)\pf v}(X_{\pa(v)}) }{g_0(x_0)} \] which follows by cancellation of terms (note that the numerator is a product over all edges except those originating from the root node, whereas the denominator is a product over all edges except those ending in a leaf node). Hence 
\begin{equation}\label{eq:Pcirc-P}  \frac{\dd \P^\circ}{\dd \P}(X_\cS) = \frac{\prod_{v\in \cV} g_{\pa(v)\pf v}(X_{\pa(v)}) }{g_0(x_0)} \cdot \left(\prod_{s\in \cS}  \weight{\pa(s)}{s}{X_{\pa(s)}}\right)^{-1} . \end{equation} 
The result now follows from 
\[ \frac{\dd \P^\star}{\dd \P^\circ}(X_\cS) =  \frac{\dd \P^\star}{\dd \P}(X_\cS) \cdot \left(  \frac{\dd \P^\circ}{\dd \P}(X_\cS)\right)^{-1} \]
and substituting \eqref{eq:evidence_}, \eqref{eq:hleaf} and \eqref{eq:Pcirc-P}. 
\end{proof}
The definitions leaves open the choice of $g_{s\pf t}$. 
Recall that for Doob's $h$-transform, $h$ is a marginal likelihood.
It stands to reason that  good choices for $g_{s \pf t}$ are approximations of the  marginal likelihood of the observations $x_{\cV_{\pa(t)}}$, given $x_s$ (and integrating out $x_u, u \in \pa(t)\setminus \{s\}$ using (prior-) information available).

\subsection{Continuous-time guided processes}\label{sec:continuoustime}

Up to this point, we have assumed edges to represent ``discrete time'' Markov-transitions. In some applications it is natural to interpret the transition over an edge $(s,t)$ as the result of evolving  a continuous-time Markov process over a fixed time interval $[S, T]$. 
Let $X\equiv (X_u,, u \in [S, T])$ be a time-homogeneous  $E$-valued Markov process  with full (extended) generator ${\cL}$ with domain ${\cD}({\cL})$. Let $\{{\fF}_t\}$ be a right-continuous history filtration and let $\PP_u$ denote the restriction of $\PP$ to ${\fF}_u$. The process $X$ has a family of evolution kernels $\kappa_{u,v}$ with transition densities $p$. For $S<u<v<T$, 
 \[
 \PP(X_v\in \dd y \mid X_u = x) = \kappa_{u,v}(x, \dd y) =  p(u, x; t,v) \dd y
 \]

The discrete pullback of a function $h_T$ (from vertex $t$, defined on $\bB(E)$) by $\kappa = \kappa_{S,T}$ extends on the interval $u \in (S, T]$ to a continuous $h$-transform by setting
\begin{equation}\label{extendedh}
h(u, x) = (\kappa_{u, T} h_T)(x) = \E[\hvertex{T}{X_T} \mid  X_u = x], \qquad  u \in (S,T].
\end{equation}
For given $h_T$, the $h$-function at vertex $T$,   $h$  satisfies the Kolmogorov backward equation
\begin{equation}\label{eq:kolmogorov}
\begin{split}
& \cA h(u, x) = 0, \qquad u\in (S,T]\\  & \text{subject to } h(T,\cdot) = h_T(\cdot),
\end{split}
\end{equation}
where $\cA = \frac{\partial }{\partial t}  + \cL$ is the generator of the space-time process $(u,X_u)$.\footnote{The space-time process $(t,X_t)$ has full   generator ${\cA}={\cL} + \frac{\partial}{\partial t}$
extending the graph
$\{ (f h, \cA (f h)), f \in \cD(\cL), h \in C[0,T]) \}$, 
 see Theorem 7.1 on  page 221 in  \cite{EthierKurtz1986}. } 
Only for few Markov processes it is possible to solve this equation in closed form leading to a tractable expression for  $h$. However, just as in the discrete-time case, we can specify a function $g$ (not necessarily satisfying \eqref{eq:kolmogorov}) and define the guided process by a change of measure.  This results in a generalisation of the definition of guided proposals for  diffusion processes from \cite{schauer2017guided}.

\begin{rem}
The pullback that we defined for the discrete case can also be seen to satisfy \eqref{eq:kolmogorov}.  To see this, consider the space-(discrete)time process $(t, X_t)$. Consider $t\in \{S,T\}$, where $T$ is a child node of $S$ (in a discrete time Markov chain one may think of $S=t$ and $T=t+1$, with $t$ indexing time). Then we have the following discrete analogue for the operator $\cA$:
\begin{align*}
 (\cA h)(S,x) :&= \EE [ h(T, X_{T}) - h(S,X_S) \mid X_S =x] \\ &=\int h(T,y) \kappa_{S\pf T}(x,\dd y) -h(S,x).
\end{align*}
Therefore, solving $ (\cA h)(S,x)=0$ is equivalent to computing the pullback of $h(T,x)$ through $\kappa_{S\pf T}$. Notationally, in the discrete case we have written $h_T(x)$ rather than $h(T,x)$. 
\end{rem}

We assume the process $(X_t)$ is defined on  $[S,T]$ and consider a fixed initial value $x_S$. 

 \begin{defn}
For $u \in (S,T]$ let 
\[ D_u^g = \frac{g(u,X_u)}{g(S+,x_{S+})} \exp\left(-\int_S^u \frac{{\cA} g}{g} (s, X_s) \dd s \right). \]
Assume $g\in {\cD}({\cA})$ is a positive function such that $D_u^g$ is a ${\fF}_u$-martingale. 
Define the change of measure 
\begin{equation}\label{eq:LRcirc_continuous} \dd \PP_u^\circ = D_u^g \dd \PP_u. \end{equation}
The process $X$, with $X_S=x_S$, under the law $\PP_u^\circ$ is denoted by $X^\circ$ and referred to as the \emph{guided process induced by $h$}.
\end{defn}
As in \cite{PalmowskiRolski2002},  we will  call a function such that $D_u^g$ is a $\fF$-martingale  a {\it good} function (sufficient conditions are given in Proposition 3.2 in \cite{PalmowskiRolski2002}). By formula (1.2) in \cite{PalmowskiRolski2002} the full generator of $X^\circ$, characterising the dynamics of $X^\circ$ has the form
\begin{equation}\label{eq:Lcirc} {\cA}^\circ f = \frac1{g} \left( {\cA}(f g) - f {\cA}  g\right).  \end{equation}
This implies $g{\cL}^\circ f =  {\cL}(f g) - f {\cL}  g$ which  will be used in multiple examples to identify/recognise the dynamics of the guided process. Note that the change of measure in this definition is similar to the change of measure in Definition \ref{defn: guided process}.

If $h$  satisfies \eqref{eq:kolmogorov}, then we define $\dd \PP_u^\star = D_u^h \dd \PP_u$ and denote the process under the law of $\PP^\star$ as $X^\star$ having full generator  $\frac1{h} \left( {\cA}(f h) - f {\cA}  h\right)$.

We obtain the following extension of Theorem \ref{thm:lr_xstar_xcirc_dag}.

\begin{thm}\label{thm:continuous_time_weight}
Consider the guided process on a DAG as defined in Definition \ref{defn: guided process}. Suppose that $t\in \cS$ and assume that the transition kernel on the edge $(s,t)$  corresponds to evolving a continuous-time guided process $X^\circ$ on $[S, T]$ with full (extended) generator $\cL$. Then the weight \eqref{eq:dagweights} over the edge equals
\[ w_{s\pf t}\left(X^\circ\right) =  \exp\left(\int_S^T \frac{{\cA} g}{g} (u, X^\circ_u) \dd u \right).\] 
\end{thm}
\begin{proof}
This follows from the proof of Theorem \ref{thm:lr_xstar_xcirc_dag}, from which it is seen that the factor $g(T, X_T^\circ)/g(S, X_S^\circ)$ cancels with factors originating from  parent and child branches.
\end{proof}
One way to choose a tractable guided process  is to consider a second continuous-time Markov process, with generator $\tilde\cL$, and take $g$  solving 
\begin{equation}\label{eq:kolm_tilde}
	(\tilde\cL +\tfrac{\partial}{\partial u}) g(u,x)=0,\quad \text{subject to}\quad  g(T,\cdot)=g_T(\cdot).
\end{equation} 
In this case
$\cA g = (\cL -\tilde \cL) g$, which leads to the following corollary. 

\begin{cor}\label{cor: markov guided}
Assume $h$ and $g$ satisfy \eqref{eq:kolmogorov} and \eqref{eq:kolm_tilde} respectively. 
Suppose both $h$ and $g$ are good functions. Then 
\begin{equation}\label{guided markov}  \frac{\dd \PP_T^\star}{\dd \PP_T^\circ}(X^\circ) = \frac{h_T(X^\circ_T)}{g_T(X^\circ_T)}   
\frac{g(S, X^\circ_S)}{h(S, X^\circ_S)}
\exp\left(\int_S^T \frac{(\cL -\tilde \cL)  g}{g} (s, X^\circ_s) \dd s \right). \end{equation}
\end{cor} 
\begin{rem}
The result is indeed a continuous-time analogue of Theorem \ref{thm:lr_xstar_xcirc_dag}. With \eqref{eq:kolmogorov} the continuous analogue of \eqref{eq:hparent}, $h$ describes the true conditional evolution along the continuous time edge given information encoded in $ h_T$.
It can be used for sampling a continuous-time Markov process conditioned on {\it exact} observations. In this case  $g(t,X^\circ_t)$ and $h(t,X^\circ_t)$ approach the same Dirac measure for $t \to T$. 
In the setting where $X$ is defined by a stochastic differential equation sufficient conditions such that  
 the first ratio on the-right hand-side of equation \eqref{guided markov}  cancels in the limit $t \to T$
have been derived in \cite{schauer2017guided} and \cite{bierkens2020}. In \cite{Corstanje2022}  sufficient conditions are derived for a wider class of Markov processes. 
\end{rem}

Using a different $h$ on the unconditioned dynamics $\kappa$, rather than directly approximating the dynamics of the conditioned process $\kappa^\star$, we aim to approximate the information brought by future observations, and let this approximation guide the process in a natural way. In absence of information from observations, the process evolves just as the unconditional one. This is emphatically the right thing to do to preserve absolute continuity when the graph becomes large.

\section{Compositionality: first steps}\label{sec:compositionality}

In this section we give a principle of compositionality for the task of sampling a guided process using the Backwards Filtering Forward Guiding algorithm.

\subsection{Sequential composition}

To shift to a compositional perspective, it is convenient to frame  the graphical structure, that is the evolution of the process along the edges of a DAG,
differently as \emph{sequential} composition of Markov kernels (as defined in \eqref{eq:chapman}).
We sequentially order the vertices, by partitioning the set $\cS$ in \emph{generations} $\Gamma_1, \Gamma_2, \dots, \Gamma_n $ and $\Gamma_0 = \{0\}$, $\Gamma_{n+1} = \cV$, such that all parents of vertices $s \in \Gamma_i$ are in $\bigcup_{0 \le j < i} \Gamma_j$. 
Write $X_i = X_{\Gamma_i}$ and set
\[
\kappa_i(x_{i-1}; \dd x_{i}) = \P(X_i \in \dd x_i \mid X_{i-1} = x_{i-1}).
\]
We assume $\kappa_{n+1}(x,\cdot)$ has density $p_{n+1}$ with respect to the dominating measure $\lambda$ (consistent with \eqref{eq:density_leaves}). 

Consequently the distribution of $X_i$ is given by a composition of kernels 
\begin{equation}\label{eq:forward decomposition} \delta_{x_0} \kappa_1 \kappa_2\cdots \kappa_i,
\end{equation} obtained by repeatedly pushing forward the initial law (a Dirac measure in $x_0$ as we assumed deterministic $X_0 = x_0$). Parentheses can be omitted by associativity. 
Note that $\{0, \dots, n+1\}$ are the vertices of a line graph with a single observation vertex $n+1$, and Doob's $h$-transform applies on the level of generations.
By repeatedly applying  the pullback operation, we  obtain the \emph{backward filter} 
\begin{equation}\label{eq: hn2}
\begin{split}
h_n(x) &= p_{n+1}(x, x_{n+1})\\
h_{i-1} &= \kappa_i \kappa_{i+1}\cdots \kappa_{n} h_{n}.
\end{split}
\end{equation}

The  $h$-transform maps   $(\kappa_i, h_i)$ to $\kappa_i^\star$, where for  $(\kappa_i, h_i)$, $i   \in \{1, \dots, n\},$
\[
\frac{\kappa_i^\star(x, \dd y)}{\kappa_i(x, \dd y)}  = \frac{h_i(y) }{(\kappa_i h_i)(x)} =: m(x,y)
\]
(not for $\kappa^\star_{n+1}$, which is just a Dirac measure in the observation). 
We will interpret $m$ as a {\it message} obtained from backward filtering, to be used in forward sampling.

\subsection{Forward and backward maps for guided processes}
The following definition is motivated by the backward filter in \eqref{eq: hn2}. In the upcoming definitions we assume kernels $\kappa\colon S \rightarrowtriangle S'$ and $\tilde\kappa\colon S \rightarrowtriangle S'$, where 
$S = (E, \fB)$ and $S' = (E', \fB')$

\begin{defn}\label{def:backwardmap} For a Markov kernel $\kappa\colon S \rightarrowtriangle S'$ and function $h \in \cB(S')$ define the {\it backward map} $\backw\kappa \colon \cB(S') \to \cB(S \times S')\times \cB(S)$ by 
\begin{equation}\label{eq:backw}
\backw{\kappa}(h) = \left(m, \kappa h\right),\quad \text{where} \quad  m(x,y) = \frac{h(y) }{(\kappa h)(x)}.
\end{equation}
\end{defn}
This map returns both the  pullback $\kappa h$  and an appropriate {\it message} $m$ for the map $\forw\kappa$ specified in the following definition.
\begin{defn}\label{def:forwardmap}
For a Markov kernel $\kappa\colon S \rightarrowtriangle S'$ , message $m\in \cB(S \times S')$ (as defined in \eqref{eq:backw}) and measure $\mu \in \cM(S)$ define the {\it forward map} $\forw\kappa: \cB(S \times S')\times \cM(S)\to \cM(S')$ by 
\begin{equation}\label{eq:forw}
 \forw{\kappa}(m, \mu)  = \nu, \quad \nu(\!\dd y) = \int m(x,y) \mu(\!\dd x) \kappa(x, \dd y). 
 \end{equation}
\end{defn}
In case the pullback is intractable or computationally demanding, we propose to replace the kernels $\kappa$ in the backward map by simpler kernels $\tilde\kappa$. Specification of $\tilde\kappa$ is  a way to define $g_{s\pf t}$ in Definition \ref{defn: guided process}.  
If $\mu$ is a probability measure and $\backw\kappa$ sends the message $m$, then $\forw\kappa(m,\mu)$ is again a probability measure. If instead, as for guided processes, $\backw{\tilde\kappa}$ sends the message $m$ (rather than $\backw\kappa$), then $\forw{\kappa}(m,\mu)$ need not be a probability measure, even if $\mu$ is. This motivates the following definition, which is actually an application and suggestive rewriting of the backward- and forward maps.

For a {\it guided process} with backward kernel $\tilde\kappa \colon S \rightarrowtriangle S'$ we have for $g \in \cB(S')$
\[
\backw{\tilde\kappa}(g) = \left(m, \tilde\kappa g\right),\quad \text{where} \quad  m(x,y) = \frac{g(y) }{(\tilde\kappa g)(x)}.
\]
If $\varpi\ge 0$ and $\mu$ is a probability measure then
\[
 \forw{\kappa}(m, \varpi \cdot \mu)(\!\dd y) = (\varpi w_\kappa(m, \mu)) \cdot \nu(\!\dd y) \]
where the {\it weight} $w_\kappa(m, \mu)$ and probability measure $\nu$ are defined by 
\begin{equation}\label{eq:wnu}
\begin{split}
 w_\kappa(m,\mu) &= \iint  m(x, y) \kappa(x,\dd y) \mu(\dd x) =  \int \frac{(\kappa g)(x)}{(\tilde \kappa g)(x)} \mu(\dd x)\\
\nu(\!\dd y) &=  w^{-1}_\kappa(m, \mu) \int  \frac{g(y) }{(\tilde \kappa g)(x)  }  \mu(\!\dd x) \kappa(x, \dd y). 
 \end{split}
 \end{equation}

Clearly, $\forw{\kappa}$ and $\backw{\tilde\kappa}$ work as a pair which we denote  as ${\langle \forw\kappa \mid \backw{\tilde\kappa} \rangle}$ and refer to as {\it optic}. We define the map $F$ that takes kernels $\kappa,\tilde\kappa\colon S \rightarrowtriangle S'$ and returns the optic
\begin{equation}\label{eq:functor}
F(\kappa, \tilde\kappa) = \optic{\kappa}{\tilde\kappa}.\end{equation}

\subsection{Backward Filtering Forward Guiding}
With these definitions, a first form of compositionality is easily obtained for the forward evolution on the level of generations (cf.\ Equation \eqref{eq:forward decomposition}). Hence, assume the forward evolution $\delta_{x_0} \kappa_1 \kappa_2\cdots \kappa_n$. Then sampling from the guided process is obtained by iteratively applying the forward and backward maps in the following way: 
\begin{equation}\label{eq:fb_linegraph}
\begin{split}
m_i, g_{i-1} =\backw{\tilde\kappa_i}(g_i),& \qquad   i=n,\ldots, 1 \qquad 
 \text{and} \\
\omega_i \cdot \delta_{x_i^\circ} \sim  \forw{\kappa_i}(m_i, \omega_{i-1} \delta_{x^\circ_{i-1}}) & \qquad  i=1,\ldots, n,
\end{split}
\end{equation}
initialised from $h_n$, $\omega_0=1$ and $x_0^\circ=x_0$. Here, for an unnormalised bounded measure $\nu$, we write
$\varpi \cdot \delta_{x} \sim \nu$ if $x \sim \nu/\int \nu $ and $\varpi = \int \nu$.

 Hence, for each $i \in \{1,\ldots, n\}$, $x_i^\circ$ is a sample drawn   independently from $\forw{\kappa_i}(m_i, \omega_{i-1} \delta_{x^\circ_{i-1}})/\omega_{i}$. 
Then $\omega_n \cdot (x^\circ_0,\ldots, x^\circ_n)$ is a weighted sample from the conditioned process, i.e.
\begin{equation}\label{eq:weighted_sample} \E [f(X_0, X_1,\ldots, X_n) \mid X_\cV=x_\cV] = \E [\omega_n f(X_0^\circ, X_1^\circ,\ldots, X_n^\circ)]. \end{equation}
 
As we backward filter the process using the kernel $\tilde\kappa$, followed by forward propagating a weighted sample by guiding we call this procedure  {\it backward filtering forward guiding}.

We introduce a {\it message passing diagram} to display the compositional structure of \eqref{eq:fb_linegraph}.
\begin{equation}\label{optic}
\begin{tikzpicture}
    \node[vert] (forw) at (0, 0) {$\forw{}$};
    \node[vert] (backw) at (4, 0) {$\backw{}$};

    \node (muout) [above left =0.8 of  forw] {$\omega' \cdot \delta_{(x^\circ)'}$};
    \node (mu) [below right = 0.8 of forw] {$\omega\cdot \delta_{x^\circ}$};
    \node (h) [above right =0.8 of  backw] {$g'$};
    \node (backwh) [below left = 0.8   of backw] {$g$};
	\node (m) [above right = 0.8 and 1.3 of forw] {$m$};

    \draw[<-] (muout) to[out=-30, in=150] (forw);
    \draw[<-] (forw) to[out=-30, in=150] (mu) ;

    \draw[->] (h) to[out=210,in=30] (backw);
    \draw[->] (backw) to[out=210,in=30] (backwh);
     \node[draw,dashed,fit=(mu) (backwh), inner xsep = 8pt] (box) {};
     
  \draw[<-] (forw) to[out=north east, in=west] ++(1,1)
     to ++(2,0)
     to[out=east, in=north west] (backw)
    ;

\end{tikzpicture}\\[0.2in]
\end{equation}
The message flow of the diagram is right to left in the direction of the arrows and illustrates how the output of $\backw{}$ (the message $m$ and pullback $g$) serves as input for $\forw{}$.

Figure \ref{fig:zipper},  which reminds of a physical zipper is a composition of the elementary diagram \eqref{optic},
in case the forward evolution is $\delta_{x_0}\kappa_1\kappa_2\cdots \kappa_4$. 
The diagram should be read starting from the very right, where the observation from leaf $4$ comes in via $g_{3}$. Then the $h$-transform is computed back to the root by moving south-west towards $g_0$. Subsequently, the conditional marginal is propagated from $\mu_0$ onwards by moving in north-west direction. Simulation of the conditioned process follows the same pattern. The diagram reveals the dependency structure and recursive nature of the algorithm for computing the $h$-transform.

A corresponding pseudo-program in unrolled form, consisting of a function accepting $g_3$ and $x_0$ and sampling, has a similar structure:
\begin{snippet}
function transform123(x0, g3)
    message3, g2 = backward(kernel3, g3)
    message2, g1 = backward(kernel2, g2)
    message1, g0 = backward(kernel1, g1)

    omega1, x1 = forwardsample(kernel1, message1, omega0, x0)
    omega2, x2 = forwardsample(kernel2, message2, omega1, x1)
    omega3, x3 = forwardsample(kernel3, message3, omega2, x2)

    return h0, omega3, (x0,x1,x2,x3) # return evidence and weighted sample
end
\end{snippet}

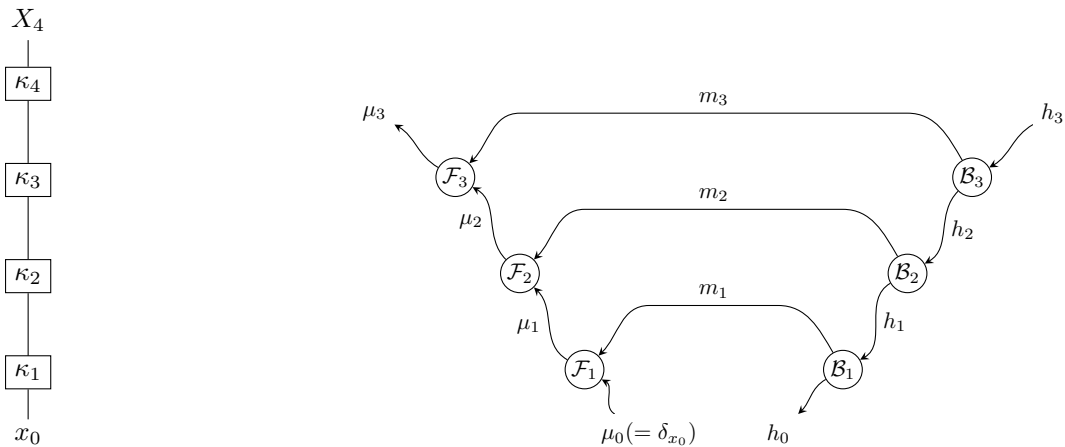
\begin{figure}
\begin{center}
\begin{tikzpicture}[style={scale=0.85}]
	\begin{pgfonlayer}{nodelayer}
		\node [] (1) at (6, -4) {$x_0$};
		
		\node [style=morphism] (23) at (6, -3) {$\kappa_{1}$};
		\node [style=morphism] (24) at (6, 0) {$\kappa_3$};
		\node [style=morphism] (32) at (6, -1.5) {$\kappa_{2}$};
		\node [style=morphism] (36) at (6, 1.5) {$\kappa_4$};
	        \node [] (2) at (6, 2.5) {$X_4$};
	
	\end{pgfonlayer}
	\begin{pgfonlayer}{edgelayer}
			\draw (1) to (23);
		\draw (32) to (24);
		\draw (32) to (23.center);
		\draw (24) to (36);
		\draw (36) to (2);
		
	\end{pgfonlayer}
\end{tikzpicture}\hfill\;
\begin{tikzpicture}[scale=0.85, every node/.style={scale=0.85}]
	\begin{pgfonlayer}{nodelayer}
		\node [style=vert] (0) at (-2, 2.5) {$\forw1$};
		\node [style=vert] (1) at (-3, 4) {$\forw{2}$};
		\node [style=vert] (2) at (-4, 5.5) {$\forw3$};

		\node [style=vertblank] (5) at (-5.25, 6.5) {$\omega_3\cdot \delta_{x^\circ_3}$};
		\node [style=vert] (6) at (2, 2.5) {$\backw1$};
		\node [style=vert] (7) at (3, 4) {$\backw2$};
		\node [style=vert] (8) at (4, 5.5) {$\backw3$};
		\node [style=vertblank] (10) at (5.25, 6.5) {$g_{3}$};
		\node [style=vertblank] (11) at (-1, 1.5) {$1\cdot \delta_{x_0}$};
		\node [style=vertblank] (12) at (1, 1.5) {$g_0$};
	
		\node [style=none] (last) at (6.2, 7) {};
		\node [style=none] (first) at (-6.2, 7) {};
		
	\end{pgfonlayer}
	\begin{pgfonlayer}{edgelayer}
		\draw[->]  [in=30, out=-150] (10) to  node[midway,right] {}  (8);
		\draw[->]  [in=30, out=-135] (8) to node[midway,right] {$g_{2} $}  (7);
		\draw[->]  [in=30, out=-150] (7) to node[midway,right] {$g_{1}$} (6);
		\draw[->]  [in=45, out=-150] (6) to  (12);
		\draw[->]  [in=-30, out=150] (11) to  (0);
		\draw[->]  [in=-45, out=150] (0) to node[midway,left] {$\omega_1\cdot \delta_{x^\circ_1}$}  (1);
		\draw[->]  [in=-30, out=135] (1) to  node[midway,left] {$\omega_2\cdot \delta_{x^\circ_2}$} (2);
		\draw[->]  [in=-30, out=150] (2) to  node[midway,right] {} (5)  ;
		
		\draw[->]    (6)   to[out=120,in=0] ++(-1,+1)  -- ($(0) + (1,+1)$) to[out=180,in=45]  (0)  node[midway,above] {$m_{1}$};
		\draw[->]   (7)   to[out=120,in=0] ++(-1,+1)  -- ($(1) + (1,+1)$) to[out=180,in=45]  (1)  node[midway,above] {$m_{2}$};
		\draw[->]   (8)   to[out=120,in=0] ++(-1,+1) -- ($(2) + (1,+1)$) to[out=180,in=45] (2)  node[midway,above] {$m_{3}$};
	\end{pgfonlayer}
	
\end{tikzpicture}
\end{center}
\caption{Left: String diagram of Markov kernels for a Markov process $x_0$, $X_1$, \dots $X_4$.  Right: Corresponding message passing diagram illustrating passing of arguments between backward operators $\backw i$ and forward operators $\forw{i}$ during backward filtering, forward smoothing for the process observed at the final state $X_4$. 
 Algorithmic time runs from right to left, starting from $g_3$ determined by $x_4$  in the upper right corner. The messages $m_i$ are passed  between backward pass and forward pass from right to left.}
\label{fig:zipper}
\end{figure}

\begin{rem}
The message $m$ from $\backw{\tilde\kappa}$ could equivalently be transmitted as pair $(g, \tilde\kappa g)$ and then combined in the recipient $\forw\kappa$, or even just as $g$, with $\tilde\kappa g$ computed in $\forw\kappa$. We choose  the quotient form as it is convenient for the exposition and acknowledges that while other choices are sometimes convenient, they are essentially equivalent: the format of the message $m$ is something between $\forw\kappa$ and $\backw{\tilde\kappa}$. Also note that the forward map is invariant under multiplying $g$ by an arbitrary strictly positive constant $c$, as the constant is cancelled in the message $m$. 
\end{rem}

\newcommand{\dunderline}[1]{\underline{\underline{#1}}}

%%%%%%%%%%%
\section{Unpacking the compositional structure within generations}\label{sec:unpacking}
In the previous section, backward filtering, forward guiding was only introduced on the level of generations. Just applying the $h$-transform at the level of generations is unsatisfactory, as the additional structure that the Markov kernels $\kappa_i$ inherit from the graphical model is ignored.  In this section we show how to deal with this using string diagrams, duplication- and product-kernels.

\subsection{String diagrams}
Our use of string diagrams originates from applied category theory. See e.g.\ Chapter 3.1 in \cite{jacobs2019structured}, \cite{Cho_2019} or \cite{Selinger2011}. These diagrams give a representation of a probabilistic model on a DAG by explicitly representing Markov kernels which ``sit'' on the edges of a graphical model. 
Typically, the joint law of $(X_s, s \in \Gamma_i)$ is not tractable, all one has  is a mechanism of creating samples $X_i = X_{\Gamma_i}$ using the forward model.
It can be described 
 using products of kernels with the identity kernel $\id$ as neutral element for the product and the special duplication Markov kernel $\Dup$,  which we now define. 
\begin{defn}
Define the ($k$-fold) {\it duplication  kernel} as the Markov kernel 
\begin{equation}\label{eq:def_Dup}
\Dup_k(x, \dd y) = \delta_{x}(\!\dd y_1) \dots \delta_x(\!\dd y_k)	
\end{equation}
and let $\Dup =\Dup_2$. 
\end{defn}

The duplication kernel allows two different kernels $\kappa_1, \kappa_2$ to apply
  independently to the same argument, giving the probability kernel  $(\Dup \comp (\kappa_1 \otimes \kappa_2))(x,
  \cdot) = (\kappa_1 \otimes \kappa_2)((x, x), \cdot)$ representing a conditionally independent transition from one parent to two children (as part of a DAG).   
    
Now we are able to express in detail the conditional distribution $\kappa_i$ of a generation of children $\Gamma_i$ given their parents (which may belong to different generations $0 \le j < i$. 
Each $\kappa_i$ can be written as product of $\kappa_{p \to s}$ (where $p\in \pa(s)$),  duplication kernels and identity kernels $\id$  acting on $X_{i-1}$.    
\begin{figure}
\begin{center}
\hfill{}
\begin{tikzpicture}[style={scale=0.8}]
	\tikzstyle{empty}=[fill=white, draw=black, shape=circle,inner sep=1pt, line width=0.7pt]
	\tikzstyle{solid}=[fill=black, draw=black, shape=circle,inner sep=1pt,line width=0.7pt]
	\begin{pgfonlayer}{nodelayer}
		\node [style=solid,label={$1$},] (0) at (-3.75, -1) {};
		\node [style=solid,label={$2a$}] (1) at (-2, 0) {};
		\node [style=empty,label={$2b$},] (4) at (-2, -2) {};
		\node [style=empty,label={$3a$}] (2) at (-0.25, 1) {};
		\node [style=empty,label={$3b$}] (3) at (-0.25, -1) {};
	\end{pgfonlayer}
	\begin{pgfonlayer}{edgelayer}
		\draw [style=edge] (0) to (1);
		\draw [style=edge] (0) to (4);
		\draw [style=edge] (1) to (2);
		\draw [style=edge] (1) to (3);
	\end{pgfonlayer}
\end{tikzpicture}
\hfill{}\begin{tikzpicture}[style={scale=0.5}]
	\begin{pgfonlayer}{nodelayer}
		\node [style=morphism] (23) at (7, -4.5) {$\qquad\quad\kappa_{1}\quad\qquad$};
		\node [style=bn] (24) at (6, 0) {};
		\node [style=bn] (35) at (7, -3) {};
		
		\node [style=none] (25) at (5, 1) {};
		\node [style=none] (26) at (7, 1) {};
		\node [style=none] (27) at (7, 1) {};
		
		\node [style=morphism] (28) at (5, 1.5) {$\kappa_{3a}$};
		\node [style=morphism] (29) at (7, 1.5) {$\kappa_{3b}$};
		\node [style=morphism] (33) at (9, 1.5) {$\kappa_{2b}$};
		
		\node [style=none] (30) at (5, 3) {};
		\node [style=none] (31) at (7, 3) {};
		\node [style=none] (34) at (9, 3) {};
		
		\node [style=none] (36) at (9, -1) {};
		
		\node [style=morphism] (32) at (6, -1.5) {$\kappa_{2a}$};
	\end{pgfonlayer}
	\begin{pgfonlayer}{edgelayer}
		\draw [style=none, bend left=45] (24) to (25.center);
		\draw [style=none, bend right=45] (24) to (26.center);
		\draw (28) to (25.center);
		\draw (29) to (27.center);
		\draw (30.center) to (28);
		\draw (31.center) to (29);
		\draw (34.center) to (33);
		
		\draw (32) to (24);
		\draw (23) to (35);
		\draw [style=none, bend left=45] (35) to (32.center);
		\draw [style=none, bend right=45] (35) to (36.center);
		\draw (36.center) to (33);
	\end{pgfonlayer}
\end{tikzpicture}\hfill\;\\
\end{center}
\caption{ A simple DAG with three leaves (left) and the corresponding string diagram (right) with 3 observables reaching the upper figure boundary. The root $0$ is not drawn. The black bullet symbol  denotes the duplication (copy) $\Dup$. The string diagram can be written as $\kappa_1 \Dup ( \kappa_{2a}\otimes \id ) (\Dup\otimes\id ) (\kappa_{3a}\otimes \kappa_{3b} \otimes \kappa_{2b})$.}
\label{fig:stringtodag}
\end{figure}

Figure \ref{fig:stringtodag} illustrates the interplay between $\Dup$ and $\otimes$.
The string diagram on the right corresponds to the composition
\begin{equation}\label{eq:comp}
\kappa_1 \; \Dup ( \kappa_{2a}\otimes \kappa_{2b} ) \; (\Dup\otimes\id ) (\kappa_{3a}\otimes \kappa_{3b} \otimes \id).
\end{equation}
The black bullet symbol in Figure \ref{fig:stringtodag} denotes the duplication (copy) $\Dup$. 
As $\kappa \id = \kappa $ and $\kappa \otimes \id = \id \otimes \kappa$, the composition in \eqref{eq:comp} is not unique and an equivalent decomposition is given by 
\[
\kappa_1 \Dup ( \kappa_{2a}\otimes \id ) (\Dup\otimes\id ) (\kappa_{3a}\otimes \kappa_{3b} \otimes \kappa_{2b}),
\]
As we require that the kernels at the leaves have a $\lambda$-density in equation \eqref{eq: hn2} it is this decomposition that we intend to use.   
\begin{ex}
For a state-space model both the ``familiar'' graphical model, as also the corresponding string-diagram are depicted in   Figure \ref{fig:ssm}.
\end{ex}
\begin{figure}
\begin{center}
\begin{tikzpicture}[style={scale=0.8}]
	\tikzstyle{empty}=[fill=white, draw=black, shape=circle,inner sep=1pt, line width=0.7pt]
	\tikzstyle{solid}=[fill=black, draw=black, shape=circle,inner sep=1pt,line width=0.7pt]
	\begin{pgfonlayer}{nodelayer}
		\node [style=solid,label=below:{$0$},] (00) at (-6, 0) {};
		\node [style=solid,label=below:{$1$},] (0) at (-4, 0) {};
		\node [style=empty,label={$y_1$},] (1obs) at (-4, 1.5) {};
		\node [style=solid,label=below:{$2$}] (1) at (-2, 0) {};
				\node [style=empty,label={$y_2$},] (2obs) at (-2, 1.5) {};
		\node [style=solid,label=below:{$3$}] (2) at (-0, 0) {};
				\node [style=empty,label={$y_3$},] (3obs) at (-0, 1.5) {};
		\node [style=none] (end) at (1.0, 0) {.};
	\end{pgfonlayer}
	\begin{pgfonlayer}{edgelayer}
		\draw [style=edge] (00) to (0);
		\draw [style=edge] (0) to (1);
		\draw [style=edge] (1) to (2);
		\draw [style=edge] (0) to (1obs);
		\draw [style=edge] (1) to (2obs);
		\draw [style=edge] (2) to (3obs);
		\draw [style=dashed box] (2) to (end);
	\end{pgfonlayer}
\end{tikzpicture}\\[2em]
\begin{tikzpicture}[style={scale=0.64}]
	\begin{pgfonlayer}{nodelayer}
		\node [style=none] (00) at (-3.0, 1) {};
		\node [style=morphism] (0) at (-1.5, 1) {$\kappa_1$};
		\node [style=bn] (1) at (-0.25, 1) {};
		\node [style=none] (3) at (9, 1) {$y_2$};
		\node [style=bn] (4) at (2.25, 0) {};
		\node [style=morphism] (6) at (6, 0) {$\kappa'_3$};
		\node [style=morphism] (7) at (3.5, -1) {$\kappa_3$};
		\node [style=bn] (8) at (4.75, -1) {};
		\node [style=none] (9) at (6, -2) {};
		\node [style=morphism] (10) at (3.5, 1) {$\kappa'_2$};
		\node [style=none] (11) at (9, 2) {$y_1$};
		\node [style=morphism] (12) at (1, 0) {$\kappa_2$};
		\node [style=morphism] (13) at (1, 2) {$\kappa'_1$};
		\node [style=none] (14) at (9, 0) {$y_3$};
		\node [style=none] (15) at (7.25, -2) {};
	\end{pgfonlayer}
	\begin{pgfonlayer}{edgelayer}
		\draw [style=morphism, bend right=45, looseness=1.25] (8) to (9);
		\draw [style=morphism, bend left=45, looseness=1.25] (8) to (6);
		\draw [style=morphism] (7) to (8);
		\draw [style=morphism, bend right=45, looseness=1.25] (4) to (7);
		\draw [style=morphism, bend left=45, looseness=1.25] (4) to (10);
		\draw [style=morphism] (10) to (3);
		\draw [style=morphism, bend right=45, looseness=1.25] (1) to (12);
		\draw [style=morphism, bend left=45, looseness=1.25] (1) to (13);
		\draw [style=morphism] (0) to (1);
		\draw [style=morphism] (13) to (11);
		\draw [style=morphism] (6) to (14);
		\draw [style=morphism] (12) to (4);
		\draw [style=morphism] (00) to (0.center);
		\draw [style=dashed box] (9) to (15.center);

	\end{pgfonlayer}
\end{tikzpicture}\\
\end{center}
\caption{Top: DAG for state-space model. Bottom:  the corresponding string diagram, rotated 90$^\circ$ to the right. Here,  $\kappa_i$ denote the morphism from  node $i-1$ to $i$ and $\kappa'_i$ is the morphism of node $i$ to the observations $x_i$. \label{fig:ssm} }

\end{figure}

\subsection{Forward and backward  maps for the duplication and product kernels}\label{subsec:duplication kernel}

As the duplication kernel is a Markov kernel, forward and backwards maps are well defined. 
\begin{prop}\label{prop:forward_backward_Delta}
For the $k$-fold duplication kernel  $\Dup_k$ defined in \eqref{eq:def_Dup} we have
\[
\backw{\Dup_k}(h) =  \left(m,\Dup_k h\right), \quad m(x,y) = \frac{h(y_1,\ldots, y_k)}{(\Dup_k h)(x)},
\]
where $(\Dup_k h)(x) = h(x,\ldots,x)$ ($k$ times). 
and
\[
\forw{\Dup_k}(m, \mu)  = f_{\Dup_k}(\mu),
\]
with $ f_{\Dup_k}(\mu)$ the image measure of $\mu$ under the diagonal map $ f_{\Dup_k}\colon x \mapsto (x, \dots, x)$ ($k$ elements).
If $\mu$ is a probability measure, then so is $\nu$.
\end{prop}
The proof follows directly from the definition of the forward- and backward maps. 

Note in particular that  sampling $\forw{\Dup_k}(m, \mu)$ means drawing $z$ from $\mu/\int
\! \dd \mu$ and setting $y_1=\cdots = y_k = z$ with weight equal to $(\int \mu \dd x)^{1/k}$ (other choices are possible as long as the product of the weights is the same).
The backward- and forward  maps can be applied to product kernels $\kappa_1\otimes \kappa_2$ upon direction application of their definitions (Definition \ref{def:backwardmap}  and Definition \ref{def:forwardmap} respectively). 
The following results shows that in case $h$ has special structure, this structure is preserved in the backward map for both the duplication and product kernel.
\begin{cor}
For $h \in \bB(S), h' \in \bB(S')$, if we define  the pointwise product as
\begin{equation}\label{eq:product_h}
(h\odot h')(x,y) = h(x)\cdot h'(y),\qquad x\in E,\, y \in E' 
\end{equation} 
then
$\Dup_k (h_1\odot \cdots \odot h_k) = \prod_{i=1}^k h_i$ and 
$(\kappa \otimes \kappa')(h \odot h')   = (\kappa h) \odot (\kappa' h')$
\end{cor}
In this way we recover the $h$-transform presented in Section \ref{subsec: doob}, in particular Equation  \eqref{eq:split_tochilds} (the {\it fusion} operation). For this, note that $h_n$ in \eqref{eq: hn2}, 
  corresponding to initialisation of $h$ from the leaves, has product form, with components as in \eqref{eq:hleaf}.  

As a tree consists of composing products of ``ordinary'' kernels and duplication kernels we have all ingredients for backwards filtering and forward guiding on a tree.
We  show this using the following representative example:

\begin{figure}
\begin{center}
\hfill{}
\begin{tikzpicture}
	\tikzstyle{empty}=[fill=white, draw=black, shape=circle,inner sep=1pt, line width=0.7pt]
	\tikzstyle{solid}=[fill=black, draw=black, shape=circle,inner sep=1pt,line width=0.7pt]
	\begin{pgfonlayer}{nodelayer}
		\node [style=solid,label={$1$}] (0) at (-4.75, 0) {};
		\node [style=solid,label={$2$}] (1) at (-3, 0) {};
		\node [style=solid,label={$4a$}] (2) at (-0.25, 1) {};
		\node [style=solid,label={$4b$}] (3) at (-0.25, -1) {};
		\node [style=empty,label={$5a$}] (4) at (1.75, 1) {};
		\node [style=empty,label={$5b$}] (5) at (1.75, -1) {};
		
	\end{pgfonlayer}
	\begin{pgfonlayer}{edgelayer}
		\draw [style=edge] (0) to (1);
		\draw [style=edge] (1) to (2);
		\draw [style=edge] (1) to (3);
		\draw [style=edge] (2) to (4);
		\draw [style=edge] (3) to (5);
	\end{pgfonlayer}
\end{tikzpicture}
\hfill{}\begin{tikzpicture}[style={scale=0.5}]
	\begin{pgfonlayer}{nodelayer}
	
		\node [style=bn] (24) at (6, 0) {};
		\node [style=none] (25) at (5, 1) {};
		\node [style=none] (26) at (7, 1) {};
		\node [style=none] (27) at (7, 1) {};
		\node [style=morphism] (28) at (5, 1.5) {$\kappa_{4a}$};
		\node [style=morphism] (29) at (7, 1.5) {$\kappa_{4b}$};
		\node [style=morphism] (30) at (5, 3) {$\kappa_{5a}$};
		\node [style=morphism] (31) at (7, 3) {$\kappa_{5b}$};
			\node [style=morphism] (23) at (6, -3)  {$\qquad\kappa_{1}\qquad$};

		\node [style=none] (33) at (5, 4.2) {};
		\node [style=none] (34) at (7, 4.2) {};
		\node [style=morphism] (32) at (6, -1.5) {$\qquad\kappa_{2}\qquad$};
		
	\end{pgfonlayer}
	\begin{pgfonlayer}{edgelayer}
		\draw [style=none, bend left=45] (24) to (25.center);
		\draw [style=none, bend right=45] (24) to (26.center);
		\draw (28) to (25.center);
		\draw (29) to (27.center);
		\draw (30.center) to (28);
		\draw (31.center) to (29);
		\draw (32) to (24);
		\draw (32) to (23.center);	
		\draw (33.center) to (30);
		\draw (34.center) to (31);
	\end{pgfonlayer}
\end{tikzpicture}\hfill\;
\hfill{}\begin{tikzpicture}[style={scale=0.5}]
	\begin{pgfonlayer}{nodelayer}
		\node [style=morphism] (23) at (6, -3)  {$\qquad\kappa_{1}^\star\qquad$};
		\node [style=bn] (24) at (6, 0) {};
		\node [style=none] (25) at (5, 1) {};
		\node [style=none] (26) at (7, 1) {};
		\node [style=none] (27) at (7, 1) {};
		\node [style=morphism] (28) at (5, 1.5) {$\kappa^\star_{4a}$};
		\node [style=morphism] (29) at (7, 1.5) {$\kappa^\star_{4b}$};
		\node [style=none] (30) at (5, 4.2) {};
		\node [style=none] (31) at (7, 4.2) {};
		\node [style=morphism] (32) at (6, -1.5) {$\qquad\kappa^\star_{2}\qquad$};

	\end{pgfonlayer}
	\begin{pgfonlayer}{edgelayer}
		\draw [style=none, bend left=45] (24) to (25.center);
		\draw [style=none, bend right=45] (24) to (26.center);
		\draw (28) to (25.center);
		\draw (29) to (27.center);
		\draw (30.center) to (28);
		\draw (31.center) to (29);
		\draw (32) to (24);
		\draw (32) to (23.center);	
	\end{pgfonlayer}
\end{tikzpicture}\\\end{center}
\caption{A  tree with two leaves and vertices numbered according to level in the corresponding string diagram of the composition of $\kappa_1 \kappa_2 \kappa_3 \kappa_4 \kappa_5 =\kappa_1 \comp \kappa_{2} \comp \Dup\comp (\kappa_{4a}\otimes \kappa_{4b})\comp (\kappa_{5a}\otimes \kappa_{5b})$.  $\kappa^\star_{5a}(x, \cdot) = \delta_{\textstyle x_{5a}}$ and $\kappa^\star_{5b}(x', \cdot) = \delta_{\textstyle x_{ 5b}}$ which are Dirac measures independent of their argument recovering the observations, are not drawn.}
\label{fig:noncollider}
\end{figure}

\begin{ex}[Backward-forward diagram on a directed tree]
Consider the directed tree and string diagram in Figure~\ref{fig:noncollider}. Assume we backward filter with kernels $\tilde\kappa_i$. Abbreviate $\backw{i}$ for $\backw{\tilde\kappa_i}$ and $\forw{i}$ for $\forw{\kappa_i}$. Corresponding to the string-diagram we have the following the  following dependency diagram which shows the compositional structure of forward and backward maps for this example:

\begin{equation}\label{diag:forwardbackward_noncollider}
\begin{tikzpicture}[scale=0.85, every node/.style={scale=0.85}]
	\begin{pgfonlayer}{nodelayer}
		\node [style=vert] (0) at (-2, 2.5) {$\forw1$};
		\node [style=vert] (1) at (-3, 4) {$\forw{2}$};
		\node [style=vert] (2) at (-4, 5.5) {$\forw3$};
		\node [style=vert] (4) at (-4.5, 7) {$\forw{4b}$};
		\node [style=vert] (5) at (-5.5, 7) {$\forw{4a}$};
		\node [style=vert] (6) at (2, 2.5) {$\backw1$};
		\node [style=vert] (7) at (3, 4) {$\backw2$};
		\node [style=vert] (8) at (4, 5.5) {$\backw3$};
		\node [style=vert] (9) at (4.5, 7) {$\backw{4b}$};
		\node [style=vert] (10) at (5.5, 7) {$\backw{4a}$};
		\node [style=vertblank] (11) at (-1, 1.5) {$1 \cdot\delta_{x_0}$};
		\node [style=vertblank] (12) at (1, 1.5) {$g_0$};
	
		\node [style=none] (last) at (6.2, 7) {};
		\node [style=none] (first) at (-6.2, 7) {};
		
		\node [style=vertblank] (out) at (7.5, 7) {$(g_{4b}, g_{4a})$};
		\node [style=vertblank] (in) at (-8.5, 7) {$(\varpi_{4a}\cdot\delta_{x^\circ_{4a}}, \varpi_{4b}\cdot\delta_{x^\circ_{4b}})$};
	\end{pgfonlayer}
	\begin{pgfonlayer}{edgelayer}
		\draw[->]  [in=60, out=-120] (9) to  node[midway,left] {$g_{3b}$}  (8);
		\draw[->]  [in=30, out=-150] (10) to  node[midway,right] {$g_{3a}$}  (8);
		\draw[->]  [in=30, out=-135] (8) to node[midway,right] {$g_{2} (= g_{3a}g_{3b})$}  (7);
		\draw[->]  [in=30, out=-150] (7) to node[midway,right] {$g_{1}$} (6);
		\draw[->]  [in=45, out=-150] (6) to  (12);
		\draw[->]  [in=-30, out=150] (11) to  (0);
		\draw[->]  [in=-45, out=150] (0) to node[midway,left] {$\varpi_1\cdot\delta_{x^\circ_1}$}  (1);
		\draw[->]  [in=-30, out=135] (1) to  node[midway,left] {$\varpi_2\cdot\delta_{x^\circ_2}$} (2);
		\draw[->]  [in=-60, out=120] (2) to  node[midway,right] {$\sqrt{\varpi_{3}}\cdot\delta_{x^\circ_{3}}$}  (4);
		\draw[->]  [in=-30, out=150] (2) to  node[midway,left] {$\sqrt{\varpi_{3}}\cdot\delta_{x^\circ_{3}}$} (5)  ;
		\draw[<-] (last.center) to (out);
		\draw[<-] (in) to (first.center);
		
		\draw[->]    (6)   to[out=120,in=0] ++(-1,+1)  -- ($(0) + (1,+1)$) to[out=180,in=45]  (0)  node[midway,above] {$m_{1}$};
		\draw[->]   (7)   to[out=120,in=0] ++(-1,+1)  -- ($(1) + (1,+1)$) to[out=180,in=45]  (1)  node[midway,above] {$m_{2}$};
		\draw[->,dashed]   (8)   to[out=120,in=0] ++(-1,+1) to ($(2) + (1,+1)$) to[out=180,in=45] (2);
		\draw[->]  (9)  to[out=120,in=0] ++(-1,+1)  -- ($(4) + (1,+1)$) to[out=180,in=45] (4) node[midway,above] {$m_{4b}$};
		\draw[->]   (10)  to[out=120,in=0] ++(-1.7,+1.7)  -- ($(5) + (1.7,+1.7)$) to[out=180,in=45] (5)   node[midway,above] {$m_{4a}$};
	\end{pgfonlayer}
	\node[draw,dashed,fit=(4) (5), inner xsep = 8pt] (box) {};
	\node[draw,dashed,fit=(9) (10), inner xsep = 8pt] (box) {};
	
\end{tikzpicture}
\end{equation}
This diagram should be read starting from the very right, where observations from the leaves $5a, 5b$ come in via $(g_{4a}, g_{4b})$. Then the $h$-transform is computed back to the root by moving south-west towards $g_0$. Subsequently, the conditional marginal is propagated from $\delta_{x_0}$ onwards by moving in north-west direction. Simulation of the conditioned process follows the same pattern, but then we would also include the weights explicitly in the diagram.
\end{ex}

\begin{rem}
In applying the scheme \eqref{eq:forward decomposition} intermediate values $X_i$ are lost. To keep them one can make a duplicate and keep one of these by formally applying the identity kernel $\id$ to it in each subsequent composition.\footnote{A string diagram is called {\it accessible} if all its internal connections are also accessible externally (\cite{jacobs2019structured}, Section 3.3).  
}
By doing so, no longer  all leaves are being observed. To deal with this consider the case where only a subset of the leaves are observed. It suffices to consider $\kappa_{n+1}$ of product form,  $\kappa_{n+1} = \kappa_{(n+1)a} \otimes \kappa_{(n+1)b}$  and assume  only $X_{(n+1)a} = x_{(n+1)a}$ to be observed.
In that caee it suffices to define 
\begin{equation}\label{eq: hn} 
  g_{n}(x)  =   p_{(n+1)a}(x, x_{(n+1)a}).
\end{equation}
\end{rem}

\begin{rem}
So far we have assumed conditionally independent observations. Suppose this is not the case, for example when $X_{n+1} = (Y, Y')$ with $\kappa_{n+1}\colon S_n \rightarrowtriangle T\otimes T'$ and only $Y' = y'$ was observed.  Then
 with $\tilde\kappa\colon S_n \rightarrowtriangle T$, $\tilde\kappa'\colon S_n \rightarrowtriangle T'$, 
 and   distribution of $(\tilde Y_x,\tilde Y'_x)$  given by $\tilde\kappa(x, \cdot)\otimes \tilde\kappa'(x,\cdot)$
\[
\E [f(X_{n+1}) \mid X_n = x, Y'=y'] = \frac{\E\;f(\tilde Y_x, y')\Phi(x, \tilde Y_x, y')}{\E \;\Phi(x, \tilde Y_x, y')},\]
where 
\[ \Phi(x, y, y') =  \frac{\dd \kappa(x, \cdot)}{\dd \tilde\kappa(x, \cdot)\otimes \tilde\kappa'(x,\cdot)}(y,y').
\]
Thus, \eqref{eq: hn2} can be replaced by
$g_{n}(x)  = \E \;\Phi(x, \tilde Y_x, y')$.

\end{rem}

%%%%%%%%%%%%%%%%%%%%%%%%%

\subsection{Extension from a directed tree to a DAG}

In this section we extend our approach from a directed tree to a true DAG. Whereas on the former each vertex has a unique parent vertex, on a DAG a vertex can have multiple parent vertices. Correspondingly, in the diagram a box with several input edges denotes a Markov kernels  whose domain is a tensor product. 

Note that if  $\kappa = \kappa_1 \comp \kappa_2$,  then $\kappa^\star = \kappa_1^\star \comp \kappa_2^\star$. Moreover, if  $\kappa_2 = \kappa_{2a} \otimes \kappa_{2b}$ has product form, we also get
\[
(\kappa_1\comp  \kappa_{2} )^\star = (\kappa_1\comp  (\kappa_{2a} \otimes \kappa_{2b}))^\star = \kappa_1^\star \comp (\kappa_{2a}^\star \otimes \kappa_{2b}^*).
\]
This corresponds to our earlier observation, that the $h$-transform can be computed in the tree one vertex and its children at a time. It illustrates that the transform as we have considered it until now is  structure preserving on a directed tree. This property is lost in case of a general DAG.

\begin{figure}
\begin{center}
\hfill{}
\begin{tikzpicture}
	\tikzstyle{empty}=[fill=white, draw=black, shape=circle,inner sep=1pt, line width=0.7pt]
	\tikzstyle{solid}=[fill=black, draw=black, shape=circle,inner sep=1pt,line width=0.7pt]
	\begin{pgfonlayer}{nodelayer}
		\node [style=solid,label={$1a$},] (-1) at (-4.75, 1) {};
		\node [style=solid,label={$1b$},] (0) at (-4.75, -1) {};
		\node [style=solid,label={$2$}] (1) at (-3, 0) {};
		\node [style=solid,label={$4a$}] (2) at (-0.25, 1) {};
		\node [style=solid,label={$4b$}] (3) at (-0.25, -1) {};
		\node [style=empty,label={$5a$}] (4) at (1.75, 1) {};
		\node [style=empty,label={$5b$}] (5) at (1.75, -1) {};
		
	\end{pgfonlayer}
	\begin{pgfonlayer}{edgelayer}
		\draw [style=edge] (-1) to (1);
		\draw [style=edge] (0) to (1);
		\draw [style=edge] (1) to (2);
		\draw [style=edge] (1) to (3);
		\draw [style=edge] (2) to (4);
		\draw [style=edge] (3) to (5);
	\end{pgfonlayer}
\end{tikzpicture}
\hfill{}\begin{tikzpicture}[style={scale=0.5}]
	\begin{pgfonlayer}{nodelayer}
	
		\node [style=morphism] (22) at (5, -3) {$\kappa_{1a}$};
		\node [style=morphism] (23) at (7, -3) {$\kappa_{1b}$};
		\node [style=bn] (24) at (6, 0) {};
		\node [style=none] (25) at (5, 1) {};
		\node [style=none] (26) at (7, 1) {};
		\node [style=none] (27) at (7, 1) {};
		\node [style=morphism] (28) at (5, 1.5) {$\kappa_{4a}$};
		\node [style=morphism] (29) at (7, 1.5) {$\kappa_{4b}$};
		\node [style=morphism] (30) at (5, 3) {$\kappa_{5a}$};
		\node [style=morphism] (31) at (7, 3) {$\kappa_{5b}$};

		\node [style=none] (33) at (5, 4.2) {};
		\node [style=none] (34) at (7, 4.2) {};
		\node [style=morphism] (32) at (6, -1.5) {$\qquad\kappa_{2}\qquad$};
		
	\end{pgfonlayer}
	\begin{pgfonlayer}{edgelayer}
		\draw [style=none, bend left=45] (24) to (25.center);
		\draw [style=none, bend right=45] (24) to (26.center);
		\draw (28) to (25.center);
		\draw (29) to (27.center);
		\draw (30) to (28);
		\draw (31) to (29);
		
		\draw (33.center) to (30);
		\draw (34.center) to (31);

		\draw (32) to (24);
		\draw (22.south) to (22.south |- 32.north);
		\draw (23.south) to (23.south |- 32.north);

	\end{pgfonlayer}
\end{tikzpicture}\hfill\;
\hfill{}\begin{tikzpicture}[style={scale=0.5}]
	\begin{pgfonlayer}{nodelayer}
		\node [style=morphism] (23) at (6, -3) {$(\kappa_{1a} \otimes \kappa_{1b})^\star$};
		\node [style=bn] (24) at (6, 0) {};
		\node [style=none] (25) at (5, 1) {};
		\node [style=none] (26) at (7, 1) {};
		\node [style=none] (27) at (7, 1) {};
		\node [style=morphism] (28) at (5, 1.5) {$\kappa^\star_{4a}$};
		\node [style=morphism] (29) at (7, 1.5) {$\kappa^\star_{4b}$};
		\node [style=none] (30) at (5, 4.2) {};
		\node [style=none] (31) at (7, 4.2) {};
		\node [style=morphism] (32) at (6, -1.5) {$\quad\kappa^\star_{2}\quad$};
		
	\end{pgfonlayer}
	\begin{pgfonlayer}{edgelayer}
		\draw [style=none, bend left=45] (24) to (25.center);
		\draw [style=none, bend right=45] (24) to (26.center);
		\draw (28) to (25.center);
		\draw (29) to (27.center);
		\draw (30.center) to (28);
		\draw (31.center) to (29);
		\draw (32) to (24);
		\draw (32) to (23.center);	
	\end{pgfonlayer}
\end{tikzpicture}\\\end{center}
\caption{A  DAG with two leaves and two roots and vertices numbered according to level in the corresponding string diagram of the composition of $\kappa_1 \kappa_2 \kappa_3 \kappa_4 \kappa_5 ={(\kappa_{1a}\otimes \kappa_{1b}) \comp \kappa_{2} \comp \Dup\comp (\kappa_{4a}\otimes \kappa_{4b})\comp (\kappa_{5a}\otimes \kappa_{5b})}$. Conditioning on the end values introduces dependence between roots.
 $\kappa^\star_{5a}(x, \cdot) = \delta_{\textstyle x_{5a}}$ and $\kappa^\star_{5b}(x', \cdot) = \delta_{\textstyle x_{ 5b}}$ which are Dirac measures independent of their argument recovering the observations, are not drawn.}
\label{fig:collider}
\end{figure}
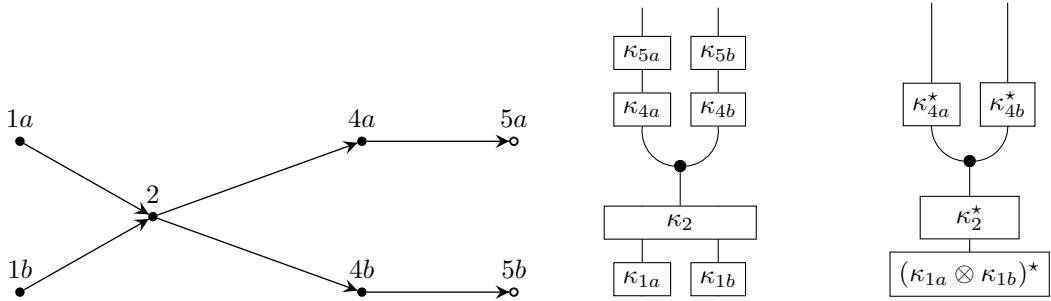

To illustrate the issue, we consider Doob's $h$-transform for  the DAG in Figure \ref{fig:collider}. The rightmost picture shows the situation corresponding to a collider in graphical models. Here  $\kappa_1 = \kappa_{1a}\otimes \kappa_{1b}$ has product form, but  $\kappa_2$ does not. The transformed kernel $\kappa_1^\star$ typically won't have product form, the states become entwined \citep{jacobs2019structured}.

  Consider a kernel with multiple parents, say $\kappa_{\pa(s)\pf s}(x,\dd y)$.    
Its input $x = \{x_u,\, u\in\pa(s)\}$ is a concatenation of values of the process at the parent vertices. For an approximation $\tilde\kappa_{\pa(s)\pf s}(x,\dd y)$ of $\kappa_{\pa(s)\pf s}(x,\dd y)$, we have pullback 
\begin{equation}\label{eq:pullback multipleparents1}
g(x_{\pa(s)}):=(\tilde\kappa_{\pa(s)\pf s} g)(x_{\pa(s)}).\end{equation} To backpropagate this function, we want to have a function in product form $\prod_{u\in \pa(s)} g_u(x_u)$ to each parent individually.  To this end, write $x_{\pa(s)}=(x_1,\ldots, x_d)$ and suppose the density of $x_{\pa(s)}$ is given by $\pi(x_1,\ldots, x_d)$. Then the joint distribution of $X_{\pa(s)}$ and its leaf descendants is  proportional to $\pi(x_1,\ldots, x_d)g(x_1,\ldots, x_d)$. Let $\pi_i$ denote the marginal distribution of $X_i$. Denote $x_{-i}$ the vector $x$ without its $i$-th component. Recall that for $\lambda$-probability densities $p$ and $q$, the Kullback-Leibler divergence of $p$ to $q$ is given by $\on{KL}(p,q) = \int p \log p/q \dd \lambda$. 
\begin{prop}\label{prop: kl_backward}
The minimiser of $\on{KL}(\pi g, \prod_{i=1}^d \pi_i g_i)$ over $g_1,\ldots, g_d$, subject to $\int g_i(x_i) \pi_i(x_i) \dd \lambda(x_i) =c_i$ ($1\le i \le d$) is given by 
\[ g_i(x_i) = c_i^{-1} \EE_\pi [g(X_1,
\ldots, X_d \mid X_i=x_i]. \]
\end{prop}
As a consequence to this proposition (with $c_j=1$ for all $j$) we define $g_u$ by 
\begin{equation}\label{eq:pullback multipleparents2}
		g_u(x_u)   = \int g(x_{\pa(s)}) \pi(x_{-u}\mid x_u) \dd x_{-u},\qquad u\in \pa(s).	
\end{equation}
Upon defining 
\begin{equation}\label{eq:kappa_u_to_s}	 
	\tilde\kappa_{u\pf s}(x_u, \dd y) := \int \tilde\kappa_{\pa(s)\pf s}(x_{\pa(s)}, \dd y) \pi(x_{-u}\mid x_u) \dd x_{-u}, 
\end{equation}
it follows that we can equivalently define $g_u$ in terms of these kernels by
\begin{align*}
g_u(x_u) =& \iint  \tilde\kappa_{\pa(s)\pf s}(x_{\pa(s)}, \dd y) g(y) \dd y\, \pi(x_{-u}\mid x_u) \dd x_{-u} \\
	= & \int \tilde\kappa_{u\pf s}(x_u, \dd y) g(y) \dd y = (\tilde\kappa_{u\pf s}g)(x_u)
\end{align*}
(use Fubini's theorem). 
Here, the  prediction kernels $\tilde\kappa_{u\pf s}(x_u,\cdot)$, $u \in \pa(s)$ serve as \emph{a priori predictor} of $X_s$ if only $x_u$ is known. (From this perspective,  $\kappa_{\pa(s)\pf s}(x,\cdot)$ is the prior predictive distribution of $X_s$ if  $x = \{x_u,\, u\in\pa(s)\}$ is known.)

\begin{defn}\label{defn:pullback_dag}
Assume $\vert\pa(s)\vert\ge 2$. 	
To the kernel $\kappa_{\pa(s)\pf s}$ we define the pullback-kernel  $\bar\kappa_{\pa(s)\pf s}$  	 induced by $\pi(x_{\pa(s)})$ 
  by its action on functions $g \in \bB(S_{s})$ 
  \[(\bar\kappa_{\pa(s)\pf s}g)(x) = \prod_{u\in \pa(s)} (\tilde\kappa_{u\pf s} g)(x_u). \]
 Here, the Markov kernels $\tilde\kappa_{u\pf s}$, $u\in \pa(s)$ are defined in Equation \eqref{eq:kappa_u_to_s}.
\end{defn}

 The operator $\bar\kappa_{\pa(s)\pf s}$ is {\it not} a  Markov kernel operator and therefore we denote it with a bar rather than a tilde.
 From it we obtain  a backward map $\cB_{\bar\kappa_{\pa(s)\pf s}}$ (Definition \ref{def:backwardmap}) 
 \[ \backw{\bar\kappa_{\pa(s)\pf s}}(g) = \left(m, \bar\kappa_{\pa(s)\pf s}g\right), \qquad m(x,y) = \frac{g(y)}{ \prod_{u\in \pa(s)} (\tilde\kappa_{u\pf s} g)(x_u)} \] 
 to be used with the forward map for
 $\kappa_{\pa(s)\pf s}(x,\dd y)\colon S_{\pa(s)} \rightarrowtriangle S$, $S=(E, \fB)$. 
From the form of the message $m$ it follows immediately that the forward map picks up the ``correct'' weight 
according to \eqref{eq:dagweights}. 
Summarising, for a kernel $\kappa(x,\dd y)$ with $x=(x_u,\, u\in \pa(s))$  pullback of $h$ consists of 
\begin{itemize}	
\item an initial ``ordinary'' pullback as in Equation \eqref{eq:pullback multipleparents1};
\item computing $(\bar\kappa_{\pa(s)\pf s} g)(x_{\pa(s)}) = \prod_{u\in \pa(s)} g_u(x_u)$ with $g_u$ as defined in Equation \eqref{eq:pullback multipleparents2}. 
\end{itemize}
Intuitively, whereas the forward evolution would sample $y$ based on its ``complete'' input $x$, the backward kernel uses a collection of kernels, one for  each parent $u\in \pa(s)$.  Each of these kernels samples $y$ based on only $x_u$. 

\begin{rem}\label{rem:changeperspective}
Up to this section our framework has been string diagrams of Markov kernels. 
Note that we make a change of perspective where we focus on the action of kernels on functions $g$, where the action need not be induced by a Markov kernel. 
\end{rem}

\section{Compositionality results for backward filtering forward guiding}\label{sec:comp_bffg}

Introducing a categorical perspective is motivated by the admission that we have already used the concepts introduced shortly in all but name in the previous section   and by the following testament: ``We should approach the problem of statistical modelling and
computation in a modular, composable, functional way, guided by underpinning principles from category theory.'' \cite{talk_Wilkinson2017}.
While we defer a discussion of guided processes from a categorical point of view to a companion paper, we do give a number of compositionality results here. 

\subsection{Composition of kernels}
Recall from \eqref{eq:functor} the definition  $
F(\kappa, \tilde\kappa) = \optic{\kappa}{\tilde\kappa}$. If $\kappa\colon S\rightarrowtriangle S'$ and $\tilde\kappa\colon S\rightarrowtriangle S'$, then $F(\kappa ,\tilde \kappa)$ acts upon a pair consisting  of a bounded function and measure as follows: $F(\kappa ,\tilde \kappa) \colon \cB(S') \times \cM(S)\to \cB(S) \times \cM(S')$ with
\[ F(\kappa ,\tilde \kappa)(g', \mu) = (\tilde\kappa g', \forw{\kappa}(m, \nu)),\] where the message $m$ is defined by $\backw{\tilde\kappa}(g')=(m, \tilde\kappa g')$. 
 Note that if $\kappa=\tilde\kappa=\id$,  then $\kappa g=g$, $m(x,y) = g(y)/g(x)$ and $\forw\kappa(m, \mu) =  \mu$. Therefore, $\langle \forw{\id} \mid \backw{\id} \rangle(g, \mu)=(g, \mu)$. Hence there exists an identity optic $\cI$, given by  $\cI=\optic{\id}{\id}$.

If Markov kernels can be composed, then their induced optics can be composed as well according to the following definition.
\begin{defn}\label{def:composition optics}
For $i\in \{1,2\}$ assume Markov kernels
$\kappa_i  \colon S_{i-1}\rightarrowtriangle S_i$ and $\tilde\kappa_i  \colon S_{i-1}\rightarrowtriangle S_i$.   Suppose 
$\mu \in \cM(S_{0})$. We denote the {\it  composition of the optics} $F(\kappa_1,\tilde\kappa_1)$ with $F(\kappa_2, \tilde\kappa_2)$ by $F(\kappa_1,\tilde\kappa_1)F(\kappa_2,\tilde\kappa_2)$. We define $F(\kappa_1,\tilde\kappa_1)F(\kappa_2,\tilde\kappa_2)  \colon  \cB(S_2) \times \cM(S_0) \to  \cB(S_0) \times \cM(S_2)$  by 
\begin{equation}\label{eq:composition} \left(F(\kappa_1,\tilde\kappa_1)F(\kappa_2,\tilde\kappa_2)\right)(g, \mu) = 
(g_{12}, \mu_{12}),\quad \end{equation}
where $(m_2, g_2) = \backw{\tilde\kappa_2}(g)$,  $(m_1, g_{12})=\backw{\tilde\kappa_1}(g_2)$ and $\mu_{12}=\forw{\kappa_2}(m_2, \forw{\kappa_1}(m_1,\mu))$.  
\end{defn}
While this definition may look somewhat complicated, the following figure facilitates understanding the composition rule:
\begin{equation}
    \begin{tikzpicture}[baseline=(R), every node/.style={scale=0.8}]
        \begin{scope}[on grid]
        \node[vert] (l') at (0, 0) {$\forw2$};
        \node[vert, below right = 0.7 and 1 of l'] (l) {$\forw1$};
        \node[vert] (r') at (5, 0) {$\tilde{\cB}_2$};
        \node[vert, below left = 0.7 and 1 of r'] (r) {$\tilde{\cB}_1$};
        \node (A) [below right = 0.7 and 1 of l] {$\mu$};
        \node (A') [below left = 0.7 and 1 of r] {$g_{12}$};
        \node (R) [left of=l'] {$\mu_{12}$};
        \node (R') [right of=r'] {$g_2$};
        \draw[<-] (R) -- (l');
        \draw[->] (R') -- (r');
        \draw[<-] (l') to[out=north east, in=west] ++(1,1)
         to ++(3,0)
         to[out=east, in=north west] (r')
        ;
        \draw[<-] (l) to[out=north east, in=west] ++(1,1)
         to ++(1,0)
         to[out=east, in=north west] (r)
        ;
        \draw[<-] (l') to[out=south east,in=west] (l);
        \draw[<-] (r) to[out=east, in=south west] (r');
        \draw[<-] (l) to[out=south east,in=west] (A);
        \draw[->] (r) to[out=south west,in=east] (A');
        \node[draw,dashed,fit=(l) (r), inner xsep = 16pt, inner ysep = 30pt] (box2) {};
        \node[draw,dashed,fit=(A) (A'), inner xsep = 8pt] (box) {};
        \end{scope}
        \end{tikzpicture}
\quad
    \begin{tikzpicture}[baseline=(R), every node/.style={scale=0.8}]
        \begin{scope}[on grid]
        \node[vert] (l_1) at (0, 0) {};
        \node[vert, below right = 0.7 and 1 of l'] (l_2) {};
        \node[vert] (r_1) at (5, 0) {};
        \node[vert, below left = 0.7 and 1 of r'] (r_2) {};
        \node (C) [below right = 0.7 and 1 of l] {$\mu$};
        \node (C') [below left = 0.7 and 1 of r] {$g_{12}$};
        \node (R) [left of=l'] {$\mu_{12}$};
        \node (R') [right of=r'] {$g_2$};
        \draw[<-] (R) -- (l');
        \draw[->] (R') -- (r');
        \draw[<-] (l') to[out=north east, in=west] ++(1,1)
         to ++(3,0)
         to[out=east, in=north west] (r')
        ;
        \draw[<-] (l) to[out=north east, in=west] ++(1,1)
         to ++(1,0)
         to[out=east, in=north west] (r)
        ;
        \draw[<-] (l') to[out=south east,in=west] (l);
        \draw[<-] (r) to[out=east, in=south west] (r');
        \draw[<-] (l) to[out=south east,in=west] (A);
        \draw[->] (r) to[out=south west,in=east] (A');
        \node[circle,draw,dashed,fit=(l_1) (l_2), inner sep = 3pt, label=110:$\cF_{1,2}$] (l_12) {};
        \node[circle,draw,dashed,fit=(r_1) (r_2), inner sep = 3pt, label=80:$\tilde{\cB}_{12}$] (r_12) {};
        \node[draw,dashed,fit=(C) (C'), inner xsep = 8pt] (box) {};
        \end{scope}
        \end{tikzpicture}
\end{equation}

The setting of the following theorem is as in the above definition, with $\kappa_{12}=\kappa_1\kappa_2$ and $\tilde\kappa_{12}=\tilde\kappa_1\tilde\kappa_2$
\begin{thm}\label{thm:seqcomp} The optic of the composition of kernels is the same as the composition of their induced optics: \[
F(\kappa_{12}, \tilde\kappa_{12})=F(\kappa_1,\tilde\kappa_1) F(\kappa_2, \tilde\kappa_2)\] 
\end{thm}
\begin{proof}
Assume $\kappa_{12}=\kappa_1\kappa_2$ and $\tilde\kappa_{12}=\tilde\kappa_1\tilde\kappa_2$.
We show that the right-hand-side of \eqref{eq:composition} equals $\optic{\kappa_{12}}{\tilde\kappa_{12}}(g,\mu)$ which is, by definition,
\[ \optic{\kappa_{12}}{\tilde\kappa_{12}}(g,\mu) = (\tilde\kappa_{12} g, \nu) \]
with $\nu(\!\dd y) = \int m_{12}(x,y) \mu(\!\dd x) \kappa_{12}(x, \dd y)$ and $m_{12}(x,y) = g(y)/(\kappa_{12} g)(x)$. Clearly, $\tilde\kappa_{12} g = \tilde\kappa_1 \tilde\kappa_2 g$. 
Hence, it remains to show that $\mu_{12}=\nu$. 
We have \[ \mu_{12}(\!\dd y) = \iint   m_{1}(z,x) m_{2}(x,y) \mu(\!\dd z) \kappa_1(z, \dd x) \kappa_2(x, \dd y).   \]
This can be simplified, since
\begin{equation}\label{eq:comp m} m_1(z,x)  m_2(x,y) = \frac{g_2(x)}{(\tilde\kappa_1 g_2)(z)} \frac{g(y)}{(\tilde\kappa_2 g)(x)}=  \frac{g(y)}{(\tilde\kappa_1 \tilde\kappa_2)(z)} =\frac{g(y)}{(\tilde\kappa_{12} g)(z)} = m_{12}(z,y), \end{equation}
using $g_2(x)=(\tilde\kappa_2 g)(x)$. The second equality follows from cancellation of  the numerator of the first term and denominator of the second term. 
Hence 
\[	\mu_{12}(\!\dd y) = \iint  m_{12}(z,y) \mu(\!\dd z) \kappa_1(z, \dd x) \kappa_2(x, \dd y)= \int m_{12}(z,y)\mu(\!\dd z) \kappa_{12}(z,\dd y) = \nu(\!\dd y). \]
\end{proof}
In particular, this theorem shows compositionality of Doob's $h$-transform in the special case where $\tilde\kappa$ is taken equal to $\kappa$.

\subsection{Product of kernels}
Next, we define the parallel product of optics.
Recall the definition of the $\odot$-product from Corollary \ref{eq:product_h}. 
\begin{defn}\label{def:parallel optics}
Assume Markov kernels
$\kappa \colon S_i \rightarrowtriangle T_i$ and $\tilde\kappa \colon S_i \rightarrowtriangle T_i$. 
Denote the  {\it parallel product of the optics} $F(\kappa_1, \tilde\kappa_1)$ with $F(\kappa_2, \tilde\kappa_2)$ by $F(\kappa_1, \tilde\kappa_1) \otimes F(\kappa_2, \tilde\kappa_2)$. 
Let 
\[ \cD(T_1,T_2,S_1,S_2)= \left\{(h, \mu) \in \cB(T_1 \times T_2) \times \cM(S_1 \times S_2) \colon h = h_1\odot h_2,\, \mu = \mu_1 \otimes \mu_2\right\}. \]
Define $F(\kappa_1, \tilde\kappa_1) \otimes F(\kappa_2, \tilde\kappa_2) \colon  \cD(T_1,T_2,S_1,S_2) \to  \cD(S_1,S_2,T_1,T_2)$ by 
\begin{equation}\label{eq:parallel} \left(F(\kappa_1, \tilde\kappa_1) \otimes F(\kappa_2, \tilde\kappa_2)\right)(g_1\odot g_2, \mu_1\otimes \mu_2) = \left(g_1'\odot g_2', \forw{\kappa_1}(m_1,\mu_1) \otimes  \forw{\kappa_2}(m_2,\mu_2)\right), \end{equation}
where $(m_i, g_i')= \backw{\kappa_i}(g_i)$ ($i=1,2$). 
\end{defn}
The following figure facilitates understanding the composition rule:
\[
\begin{tikzpicture}[every node/.style={scale=0.8}]
\begin{scope}[on grid]

\node[vert] (l) at (0, 0) {$\forw{1}$};
\node[vert] (r) at (6, 0) {$\tilde{\cB}_{1}$};

\node (S) [left of=l] {$\mu_1'$};
\node (A) [below right = 2 and 2 of l] {$\mu_1$};
\node (S') [right of=r] {$g_1'$};
\node (A') [below left = 2 and 2 of r] {$g_1$};

\draw[<-] (S) -- (l);
\draw[<-] (l) to[out=south east,in=west] (A);

\draw[->] (S') -- (r);
\draw[->] (r) to[out=south west,in=east] (A');

\draw[<-] (l) to[out=north east, in=west] ++(1,0.75)
 to ++(4,0)
 to[out=east, in=north west] (r)
;

\node[vert] (l') at (0, -1.5) {$\forw{2}$};
\node[vert] (r') at (6, -1.5) {$\tilde{\cB}_{2}$};

\node (T) [left of=l'] {$\mu_2'$};
\node (B) [below right = 1 and 2 of l'] {$\mu_2$};
\node (T') [right of=r'] {$g_2'$};
\node (B') [below left = 1 and 2 of r'] {$g_2$};

\draw[<-] (T) -- (l');
\draw[<-] (l') to[out=south east,in=west] (B);

\draw[->] (T') -- (r');
\draw[->] (r') to[out=south west,in=east] (B');

\draw[<-] (l') 
 to[out=north east, in=west] ++(2,1.5)
 to ++(2,0)
 to[out=east, in=north west] (r')
;

\node[draw,dashed,fit=(A) (A') (B) (B'), inner xsep = 12pt] (box) {};

\end{scope}
\end{tikzpicture}
\]

\begin{thm}\label{thm:parcomp} On  $\cD(T_1,T_2,S_1,S_2)$
\begin{equation}\label{eq:opticsparallel} F(\kappa_1, \tilde\kappa_1) \otimes F(\kappa_2, \tilde\kappa_2) =F(\kappa_1\otimes \kappa_2,\tilde\kappa_1 \otimes \tilde\kappa_2). \end{equation} 
\end{thm}
\begin{proof}
Define
\[ (\bar g, \bar\nu) = \optic{\kappa_1\otimes \kappa_2}{\tilde\kappa_1\otimes \tilde\kappa_2}(g_1\odot g_2, \mu_1\otimes \mu_2). \]
From Definition \ref{def:parallel optics} it follows that we need to show that this equals the right-hand-side of \eqref{eq:parallel}.
First note that 
\[ \bar g = (\tilde\kappa_1\otimes \tilde\kappa_2)(g_1\odot g_2) = (\tilde\kappa_1 g_1) \otimes (\tilde\kappa_2 g_2) = g_1' \odot g_2'. \]
Now, with 
\[	m(x,y) = \frac{(g_1\odot g_2)(y)}{(\tilde\kappa_1 \otimes \tilde\kappa_2)(g_1\odot g_2)(x)} \]
we have 
\begin{align*} \bar\nu(\dd y)  &= \int m(x,y) (\kappa_1 \otimes \kappa_2)(x,\dd y) (\mu_1\otimes \mu_2)(\dd x) \\ & = \prod_{i=1}^2 \int \frac{g_i(y_i)}{(\tilde\kappa_i g_i)(x_i)} \kappa_1(x_i,\dd y_i) \mu_i(\dd x_i)= \forw{\kappa_1}(m_1,\mu_1) \otimes  \forw{\kappa_2}(m_2,\mu_2), \end{align*}
where $(m_i, g_i')= \backw{\tilde{\kappa}_i}(g_i)$ ($i=1,2$). 
\end{proof}

Contrary to the result on sequential composition of optics (Theorem \ref{thm:seqcomp}), this theorem is restricted to product measures. Hence, to forward sample in a compositional way, Theorem \ref{thm:parcomp} suggests that we first need to marginalise the measure that is pushed forward by $\cF$. 
\begin{defn}\label{def:M}
The marginalisation operator  $M$  acting on measures $\mu\in \cM(S \otimes S')$ is defined by by (using postfix notation for $M$)
\[(\mu  M)(B \times B') = \frac{ \int \mu(B, \dd x')}{\sqrt{\int \dd \mu}} \otimes \frac{\int \mu(\! \dd x, B')}{\sqrt{\int \dd \mu}},\qquad B \in \fB, B' \in \fB'.
\]
\end{defn}
Rather than pushing forward a measure using the map $\forw{}$, one can also push forward a weighted sample, which does not suffer from such restrictions. From a statistical perspective, this is the more interesting setting. To define sampling in a compositional way, we recall the following result.
\begin{lem}[\cite{kallenberg2002foundations}, p.~56]\label{lemma: randomisation}
For a Markov kernel $\kappa$ between a measure space $(E, \fB)$ and a Borel space $(E', \fB')$, there is a measurable function $\sigma_\kappa\colon E\times [0,1] \to E'$  and a random variable $Z \sim U[0,1]$ such that the random variable $\sigma_\kappa(x, Z)$ has law $\kappa(x, \cdot)$ for each $x \in E$. The random variable $Z$ is  called an \emph{innovation variable}.
\end{lem}
Then, compositionality results for sampling can be obtained by explicitly ``carrying along'' the innovation variable $Z$. A detailed exposition of this approach in the language of category theory is part of ongoing research and considered slightly out of scope for this paper. On a directed tree this yields a fully compositional algorithm, where we are free to either first compose Markov kernels and then use BFFG or to compose optics. Such structure is however lost on a directed acyclic graph which is not a tree. This can be seen from Definition \ref{defn:pullback_dag}, where the pullback is defined as an operator, which not necessarily needs to be the pullback of a Markov kernel.

\subsection{Automatic BFFG}

We now have  all constituents for a universal procedure of backward filtering and forward guiding (by sampling), which doesn't put constraints on the program architecture, because it is applicable to any composition of (product-) kernels
and it preserves the program architecture. The transformed program is a two-pass algorithm where  the forward pass has the same architecture as the program: it has the same Markovian structure, the same function calls, the same diagrammatical form. 
Also the backward pass has the same architecture, though traversed backwards.

Forwards, a kernel $\kappa$ is replaced by $\forw\kappa$, while backwards it is replaced by $\backw{\tilde\kappa}$. Jointly, they form the optic $F(\kappa, \tilde\kappa) = \optic{\kappa}{\tilde\kappa}$.  As an example, automatic BFFG entails that the forward evolution
\begin{equation}\label{eq:forwardexample} \delta_{x_0} \kappa_1 \Dup (\kappa_2\otimes \kappa_3) \kappa_4 \end{equation}
is turned into 
\begin{equation}\label{eq:progtransform1}  \delta_{x_0} F(\kappa_1, \tilde\kappa_1) F(\Dup, \Dup) F (\kappa_2 \otimes \kappa_3, \tilde\kappa_2 \otimes \tilde\kappa_3) 
\end{equation}
by applying $F$ to all terms except for the final kernel pointing to an observation leaf. Kernel $\kappa_4$ emits information in the form of $g_3$, defined by  
\[ g_3 \colon  x \mapsto   \left.\frac{\dd \kappa_4(x, \dd y)}{\lambda(\dd y)}\right|_{y=x_{\rm obs}}. \]
By applying the program transformation \eqref{eq:progtransform1} to $(g_3, \delta_{x_0})$ a weighted sample of the smoothing distribution can be obtained, as in \eqref{eq:weighted_sample}. Equation \eqref{eq:progtransform1} is a translation of Markov kernels in \eqref{eq:forwardexample} to optics.

In this way, for $X$ on any DAG (translated to the corresponding string diagram of kernel compositions) an approximation $X^\circ$ of the conditional $X^\star$ is obtained that can be sampled from. {\it Note that  $X^\circ$, unlike $X^\star$, preserves the Markovian structure of $X$ and does not introduce new edges in the dependency graph (as it was the case in Figure~\ref{fig:collider}). This is of importance: for a probabilistic program which is able to sample $X$, it is easier  to adapt it to sample from $X^\circ$ with weights, than to adapt it to sample from $X^\star$ directly (which has a different dependence structure). The former is possible as automatic program transform.}

%%%%%%%%%%%%%%%%%%
\section{Examples of discrete-time guided processes}\label{sec:examples}

In this section we provide various examples to illustrate our approach. 

The examples make use of the idea that the composition of backward steps is computationally cheap if it preserves a parametrised form of $g$.  Hence,  suppose $g$ can be parametrised by $g=U(\zeta)$ for a parameter $\zeta$ and a parametrisation $U$. Backward filtering is tractable when
\begin{itemize}
	\item $\tilde \kappa g= U(\zeta')$ for some $\zeta'$;
	\item $\kappa_{\Dup_k}(g_1\odot \cdots \odot g_k) =U(\zeta')$ for some $\zeta'$ (this corresponds to the fusion step).
\end{itemize}
In some models the (otherwise intractable) fusion step can be bypassed by only applying the backward map on a line graph, see 
Section \ref{sec:backward diagonlisation} for examples. In case $g$ is proportional to the density of an exponential family member distribution, then the fusion step  is tractable.

In applying the forward map for sampling less tractability is required, as for example the pseudo-marginal Metropolis-Hastings algorithm can be used to unbiasedly estimate $(\kappa g)(x)$. Details are given ahead in Section \ref{subsec:mcmc}.

\subsection{Gaussian filtering and smoothing}\label{subsec:gauss}

Consider the stochastic process on $\cG$ with transitions defined by 
\begin{equation}\label{eq:gaussian_model}X_t \mid X_s = x \sim N(\mu_t(x), Q_t(x)), \qquad t\in \ch(s).  \end{equation}
Unfortunately, for this class of models (parametrised by $(\mu_t, Q_t)$), only in very special cases Doob's $h$-transform is tractable. For that reason,  we look for  a tractable $h$-transform to define a guided process $X^\circ$. This is obtained in the specific case where 
\[\tilde X_t \mid \tilde X_s =x \sim N(\Phi_t x +  \beta_t, Q_t), \qquad t  \in \ch(s). \]
Below we show that backward filtering, sampling from the forward map and computing  the weights are all  fully tractable.

In the following, we write $\phi(x; \mu, \Sigma)$ for the density of the $N(\mu,\Sigma)$-distribution, evaluated at $x$. Similarly, we write $\phi^{\rm{can}}(x; F, H)$ for the density of canonical Normal distribution with potential $F=\Sigma^{-1} \mu$ and precision matrix $H = \Sigma^{-1}$, evaluated at $x$.  

The $g$ functions  can be parametrised by the triple $(c, F, H)$:\begin{equation}\label{eq:gaussian_h} \begin{split} g(y) &= \exp \left(c -\frac12 y^\T H y + y^\T F \right) \\ & =  \varpi(c,F,H) \phi^{\rm{can}}(y; F, H)= \varpi(c, F, H) \phi(y; H^{-1}F, H^{-1}), \end{split} \end{equation}
where 
\[ 	\log \varpi(c, F, H) = c - \log \phi^{\on{can}}(0, F, H).\]
We write $g(x)=U(c, F, H)(x)$ for $g$ and parameters as in \eqref{eq:gaussian_h}.

\begin{thm}\label{thm:gaussianformulas}
Let $\kappa(x,\dd y) = \phi(y; \mu(x), Q(x)) \dd y$ and $\tilde\kappa(x,\dd y) = \phi(y; \Phi x + \beta, Q)\dd y$.  Assume that $g=U(c, F, H)$ with invertible $H$. 
\begin{enumerate}[label=(\roman*)]
\item Pullback for $\kappa$: With $C(x)= Q(x)+ H^{-1}$ invertible,
\[ (\kappa g)(x) = \varpi(c, F, H) \phi(H^{-1}F; \mu(x), C(x)).
\]
\item Backward Information Filter initialisation from leaves:  if $y \sim N(\Phi x + \beta, Q)$ is observed at a leaf, then 
\[
 g(x) =  U(\bar c, \bar F, \bar H)(x)\quad\text{where}\quad
\begin{cases} \bar H = \Phi^\T Q^{-1} \Phi \\
 \bar F = \Phi^\T Q^{-1} (y - \beta)\\
	\bar c =   \log  \phi(\beta; y, Q).
\end{cases}
\]
	\item Pullback for $\tilde \kappa$: With  $C= Q+ H^{-1}$ invertible,
\[
  \tilde \kappa g = U(\bar c, \bar F, \bar H)	
  \quad\text{where}\quad
\begin{cases}
	\bar H = \Phi^\T C^{-1} \Phi \\
\bar F = \Phi^\T C^{-1} (H^{-1}F-\beta)\\
	\bar c =  c   - \log \phi^{\rm can}(0, F, H) + \log  \phi(\beta; H^{-1}F, C)
\end{cases}
\]
and
\[\backw{\tilde \kappa} =  (m, \tilde \kappa g), \quad  m(x,y) = \frac{U(c, F, H)(y)}{ U(\bar c, \bar F, \bar H)(x)}.\]

If $\Phi$ is invertible, then alternatively
\[	\log \varpi(\bar c, \bar F, \bar H) = \log \varpi(c, F, H)  - \log |\Phi|. \]

\item Forward map: 
\[ \forw{\kappa}(m, \mu) = \nu, \quad \nu(\!\dd y) = \int w(x) \phi^{\rm can}(y; F+ Q(x)^{-1}  \mu(x), H+ Q(x)^{-1}) \mu(\!\dd x) \dd y, \]
where the weight at $x$ is given by  \[ w(x) = \frac{(\kappa g)(x)}{(\tilde\kappa g)(x)} =  
 \frac{ (\kappa g)(x)}{ U(\bar c, \bar F, \bar H)(x)}
=    \frac{\phi(H^{-1}F; \mu(x), Q(x) + H^{-1})}{\phi^{\on{ can}}(0,F,H) U(\bar c - c, \bar F, \bar H)(x)}.\]

In particular, if ${\mu=\varpi \delta_x}$, the forward map gives the importance weight and conditional 
distribution for the next random draw, 
\[
\varpi w(x) \cdot N^{\mathrm{can}}\left( F + Q(x)^{-1} \mu(x), H + Q(x)^{-1} \right).
\]

\item Fusion: if $g_i = U(c_i, F_i, h_i)$, $i = 1, \dots, k$, then
\[  \Dup_k(g_1 \odot \cdots \odot g_k) =  U\left( \sum_{i=1}^k c_i, \sum_{i=1}^k F_i, \sum_{i=1}^kH_i\right). \]
In particular, $\forw{\Dup_k}(\cdot, \varpi \delta_x)$, where $\varpi$ is a weight, makes $k$ weighted copies $\varpi^{1/k} \delta_x$

\item 

Backward splitting of $g$ to two parents nodes: suppose $g(y) = \varpi(c,F,H) \phi(y; \mu, P)$, with $\mu=H^{-1}F$ and $P=H^{-1}$ and assume a flat prior on the parent vertices. Consider the partitioning
\[ y=\Bm y^{(1)} \\ y^{(2)} \Em \qquad  \mu=\Bm \mu^{(1)} \\ \mu^{(2)} \Em \qquad P=\Bm P^{(11)} & P^{(12)}\\ P^{(21)} & P^{(22)} \Em. \]
Then
\[ (\bar\kappa g)(y^{(1)}, y^{(2)}) = \sqrt{\varpi(c,F,H)}\phi(y^{(1)}; \mu^{(1)}, P^{(11)}) \odot \sqrt{\varpi(c,F,H)}\phi(y^{(2)}; \mu^{(2)}, P^{(22)}). \]

\end{enumerate}
\end{thm}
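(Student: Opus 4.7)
The unifying plan is to reduce every assertion to two standard Gaussian identities: the convolution formula
\[
\int \phi(y;\mu_1,\Sigma_1)\,\phi(y;\mu_2,\Sigma_2)\,\dd y \;=\; \phi(\mu_1-\mu_2;0,\Sigma_1+\Sigma_2),
\]
and completion of the square for a quadratic in $x$. Both the standard $\phi(\cdot;\mu,\Sigma)$ and canonical $\phi^{\on{can}}(\cdot;F,H)$ forms are related by $F=\Sigma^{-1}\mu$, $H=\Sigma^{-1}$, and I would move freely between them; the parametrisation $h=U(c,F,H)$ separates the ``kernel'' Gaussian from the scalar weight $\varpi(c,F,H)$ so that after each operation I can just read off $(\bar c,\bar F,\bar H)$.

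For (i), I would rewrite $h(y)=\varpi(c,F,H)\phi(y;H^{-1}F,H^{-1})$ using \eqref{eq:gaussian_h} and apply the convolution identity with $(\mu_1,\Sigma_1)=(\mu(x),Q(x))$ and $(\mu_2,\Sigma_2)=(H^{-1}F,H^{-1})$, so that $\kappa h$ becomes a scalar in $x$ of the claimed form. Part (ii) is then just the observation that $h(x)=\phi(y_v;\Phi x+\beta,Q)$ by \eqref{eq:hleaf}; I would expand $-\tfrac12(y_v-\Phi x-\beta)^\T Q^{-1}(y_v-\Phi x-\beta)$, collect the $x^\T x$, $x^\T$, and $x$-free parts, and compare with the exponent in \eqref{eq:gaussian_h} to read off $\bar H,\bar F,\bar c$. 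For (iii), I would feed $\mu(x)=\Phi x+\beta$ into (i), obtaining $(\tilde\kappa h)(x)=\varpi(c,F,H)\phi(H^{-1}F;\Phi x+\beta,C)$, and then repeat the completion-of-the-square step from (ii) with $y_v$ replaced by $H^{-1}F$ and $Q$ replaced by $C=Q+H^{-1}$; this gives $\bar H,\bar F$ immediately, and $\bar c$ follows by keeping track of the $\varpi(c,F,H)$ prefactor and the normalising constant $\phi(\beta;H^{-1}F,C)$. The alternative formula for $\bar c$ when $\Phi$ is invertible would fall out from a change of variables $y\mapsto \Phi^{-1}(y-\beta)$ in the integral. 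The message $m$ in $\backw_{\tilde\kappa}$ is then just the quotient prescribed by definition~\ref{def:backwardmap}.

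For (iv), I would substitute $m(x,y)=h(y)/(\tilde\kappa h)(x)$ and $\kappa(x,\dd y)=\phi(y;\mu(x),Q(x))\dd y$ into the defining integral of $\forw_\kappa$, then complete the square in $y$ inside the product $\phi(y;\mu(x),Q(x))\phi(y;H^{-1}F,H^{-1})$: this decomposes as $\phi(H^{-1}F;\mu(x),Q(x)+H^{-1})$ times the canonical Gaussian in $y$ with potential $F+Q(x)^{-1}\mu(x)$ and precision $H+Q(x)^{-1}$. The $x$-dependent scalar that comes out is exactly $(\kappa h)(x)$ divided by $(\tilde\kappa h)(x)$ (by (i) and (iii)), which is $w(x)$; the $\delta_x$-case is immediate. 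Part (v) is pure bookkeeping of exponents: multiplying the canonical Gaussians $\phi^{\on{can}}(\cdot;F_i,H_i)$ adds the linear and quadratic coefficients, and the constants $c_i$ add, which together with the corollary after definition~\ref{def:forwardmap} gives the claim for $\Delta_k$. Part (vi) follows from the block decomposition of a Gaussian into its marginals, with the scalar $\varpi(c,F,H)$ split evenly as $\sqrt{\varpi}\cdot\sqrt{\varpi}$ to fit the definition of $M^\star$.

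The only step requiring genuine care is the constant $\bar c$ in (iii): it is the sum of $c$, the canonical normalisation $-\log\phi^{\on{can}}(0,F,H)$ removed when passing from $U(c,F,H)$ to the explicit Gaussian density, and the convolution's normalising factor $\log\phi(\beta;H^{-1}F,C)$. Crosschecking this expression against the change-of-variables formula in the invertible-$\Phi$ case (which gives $\log\varpi(\bar c,\bar F,\bar H)=\log\varpi(c,F,H)-\log|\Phi|$) is the cleanest sanity check I would run.
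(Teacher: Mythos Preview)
Your proposal is correct and follows essentially the same approach as the paper: both rely on the Gaussian convolution identity for (i), then specialise to $\mu(x)=\Phi x+\beta$ and expand the quadratic to read off $(\bar c,\bar F,\bar H)$ for (iii), handle (iv) by completing the square in $y$ inside $h(y)\phi(y;\mu(x),Q(x))$, and dismiss (v) as routine bookkeeping of exponents. The paper's proof is terser and does not spell out (ii) and (vi) separately, but your treatment of those (direct expansion of the leaf likelihood, block marginalisation of a Gaussian) is exactly what is implicitly assumed.
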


\begin{proof} The proof is elementary. For completeness, the proof is given in  appendix \ref{sec:proofs}. For the pullback, similar results have appeared in the literature,  for example  Chapter 7 in \cite{chopin2020introduction},   \cite{wilkinson2002conditional} and Chapter 5 in \cite{cappe2005springer}. 
\end{proof}
Taking an inverse of $H$ respective $C$ can be avoided using
$
(Q+H^{-1})^{-1} H^{-1} = (Q H+I)^{-1} , 
(Q+H^{-1})^{-1}  = H - H (H+Q^{-1})^{-1} H .  
$
While backward filtering for a linear Gaussian process on a tree is well known (see e.g.\ \cite{Chou1994}, section 3), results presented  in the literature often don't state update formulas for the constant $\varpi$ (or equivalently $c$). If the mere goal is smoothing with known dynamics, this suffices. However, we will also be interested in estimating parameters in $\mu$ and or $Q$ then the constant cannot be ignored.

In case the dynamics of $X$ itself are linear, it simply follows that the weights equal $1$. Moreover, if additionally the process lives on a ``line graph with attached observation leaves'', where each non-leaf vertex has one child in $\cS$ and at most one child in $\cV$, our procedure is essentially equivalent to Forward Filtering Backward Sampling (FFBS, \cite{CarterKohn(1994)}), the difference being that our procedure applies in time-reversed order. On a general directed tree however the ordering cannot be changed and the filtering steps must be done backwards, as we propose, just like in message passing algorithms in general.

\newcommand{\I}[1][]{[{#1}]}

\subsection{Discrete state-space Markov chains and particle systems}\label{sec:discrete}

\subsubsection{Branching particle on a tree}
Assume a  ``particle'' takes values in a finite state space $E=\{1,\ldots, R\}$ according to the  $R\times R$ transition matrix $K$.  At each vertex, the particle is allowed to copy itself and branch. 

The algorithmic elements  can easily be identified. The forward kernel $\kappa(x,\dd y)$ can be identified with the  matrix $K$. Furthermore, the map $x\mapsto g(x)$ can be identified with the column vector $\bs{h}=[g_1,\ldots, g_R]^\T$, where $g_i=g(i)$. 
Hence $g$ is parametrised by $\bs{g}$ and we write $g=U(\bs{g})$.

The following theorem is easily proved. It gives the necessary ingredients for applied the backward and forward map. 
\begin{thm}
Let $\kappa(x,\dd y)$ and d $\tilde\kappa(x,\dd y)$ be represented by the stochastic matrices $K$ and $\tilde K$ respectively.  Assume that $h=U(\bs{g})$. 
\begin{enumerate}[label=(\roman*)]
\item Pullback for $\kappa$: $\kappa g = U(K\bs{g})$. 
	\item Pullback for $\tilde \kappa$: $\tilde \kappa g = U(\tilde  K\bs{g})$ and
\[\backw{\tilde \kappa} =  (m, \tilde \kappa g), \quad  m(x,y) = \frac{U(K g)(y)}{ U(\tilde K g)(x)}.\]
 Denote by $M$ the matrix with elements $m(x,y)$, $x,y \in E$.

\item Fusion: if $g_i = U(\bs{g}_i)$, $i = 1, \dots, k$, then
\[  \Dup_k(g_1 \odot \cdots \odot g_k) =  U\left( \bigcirc_{i=1}^k \bs{g}_i\right),\] where $\bigcirc$ denotes the Hadamard (entrywise) product.
\item Forward map:  Let $\mu$ be a (row) probability vector. If $\forw{\kappa}(m, \mu)=\nu$, 
then for $k\in E$
\[	 \nu(\{k\}) = \mu (M\bigcirc K) e_k, \] with $e_k$ the $k$-th standard basis-vector in $\RR^R$. 

\item The weight at $x\in E$ is given by  \[ w(x) = \frac{(\kappa g)(x)}{(\tilde\kappa g)(x)} =   \frac{\langle K\bs{g}\rangle_x}{\langle\tilde K\bs{g}\rangle_x} \]
where we denote the $j$-th element of the vector $a$ by $\langle a\rangle_j$.

\item Initialisation at the leaves: at a leaf vertex with observation $k\in E$, set $\bs{g}= e_k$. 
\item Backward splitting of $g$ to two parents nodes:  suppose $E = \{(x,y)\colon x = 1, \dots, R_1, y = 1, \dots, R_2\}$.
Then $g$ has the form
\[ g = [g_{11} \cdots g_{1R_2} g_{21} \cdots g_{2R_2} \cdots \cdots g_{R_1 1}\cdots g_{R_1R_2}]^\T. \]
Let \[ g^{(1)}_i = \sum_{j=1}^{R_2} g_{ij} \in \bB(\{1, \dots, R_1\}), \qquad g^{(2)}_i = \sum_{j=1}^{R_1} g_{ji} \in  \bB(\{1, \dots, R_2\})\]
and $c =\sum_{i,j} h_{ij}$, then under  a flat prior on the parent vertices
$\bar\kappa g = g^{(1)}/ \sqrt c \odot g^{(2)}/  \sqrt c $. 

\end{enumerate}
\end{thm}

While we can simply take $\tilde\kappa=\kappa$ on a tree, yielding unit weights, in case $R$ is very large, it can nevertheless be advantageous to  use a different (simpler) map $\tilde\kappa$. To see, this, note that we only need to compute one element of the matrix vector product $Kg$, but need to compute the full vector $\tilde{K}g$ in the backward map. Hence, choosing $\tilde K$ sparse can give computational advantages.

\subsubsection{Interacting particles -- cellular automata -- agent based models}

Now consider a discrete time interacting particle process, say with $n$ particles, where each particle takes values in $\{1,\ldots, R\}$. Hence, a particle configuration $x$ at a particular time-instant takes values in $E=\{1,\ldots, R\}^n$.

A simple example consists of particles with state $E = \{1\equiv \mathbf{S}, 2\equiv \mathbf{I}, 3\equiv \mathbf{R}\}$ that represent the health status of individuals that are either {\bf S}usceptible, {\bf I}nfected or {\bf R}ecovered. Then the probability to transition from state {\bf S} to {\bf I} may depend on the number of nearby particles (individuals) that are infected, hence the particles are ``interacting''. A similar example is obtained from time-discretisation of the contact process (cf.\ \cite{Liggett05}).
A visualisation of the DAG is given in Figure~\ref{fig:dag sir}: time moves from left to right, vertical connects represent state changes of particles and the diagonal connections represent local interactions. Note that each particle has multiple parents defined by a local neighbourhood and that a model like this is sometimes referred to as a cellular automaton.

Let $x\in E$. Without additional assumptions, the forward transition kernel 
can be represented by a $R^n \times R^n$ transition matrix. With a large number of particles such a model is not tractable computationally. To turn this into a more tractable form, we make the simplifying assumption that  conditional on $x$ each particle transitions independently. This implies that  the forward evolution kernel $\kappa(x,\dd y)$ can be represented by $(K_1(x),\ldots, K_n(x))$, where $K_i(x)$ is the $R\times R$ transition matrix for the $i$-th particle. Note that the particles interact because the transition kernel for the $i$th particle  may depend on the state of {\it all} particles. 

Due to interactions, it will computationally be very expensive to use $\kappa$ for the backward map. Instead, in the backward map we propose to ignore/neglect all interactions between particles. This means that effectively we use a backward map where all particles move independently, and the evolution of the $i$-th particle  depends only on $x_i$ (not $x$, as in the forward kernel). Put differently, in the backward map, we simplify the graphical model to $n$ line graphs, one for each particle. See Figure~\ref{fig:dag sir}. 

This choice implies that  $\tilde\kappa$ can be represented by $(\tilde K_1,\ldots, \tilde K_n)$ and the map $x\mapsto g(x) = \prod_{i=1}^n g_i(x)$ can be represented by $(\bs{g}_1,\ldots,\bs{g}_n)$, where  $\bs{g}_i = [g_{i1},\ldots, g_{iR}]^\T$. We write $g=U(\bs{g}_1,\ldots,\bs{g}_n)$.

\begin{thm}
Assume $\kappa$ and $\tilde \kappa$ are represented by $R\times R$ stochastic matrices $K_1,\ldots, K_n$ and $\tilde K_1,\ldots, \tilde K_n$ respectively.  Assume that $g=U(\bs{g}_1,\ldots,\bs{g}_n)$.
\begin{enumerate}[label=(\roman*)]
\item Pullback for $\kappa$: $U(\kappa g)(x) = (K_1(x)\bs{g}_1,\ldots,K_n(x)\bs{g}_n)$

	\item Pullback for $\tilde \kappa$: $U(\tilde \kappa g) = (\tilde K_1\bs{g}_1,\ldots,\tilde K_n\bs{g}_n)$
\item Parametrisation of messages: For each $i\in \{1,\ldots, n\}$ let $M_i$ be the matrix with elements
\[ M_i(x,y) = \frac{U(K_i g)(y)}{ U(\tilde K_i g)(x)}.\]
Set $m = (M_1,\ldots, M_n)$. 
\item 
 Forward map:  Let $\mu_1,\ldots, \mu_n$ be a $n$ (row) probability vectors with $\mu_i$ representing the distribution of the $i$-th particle. If $\mu = \otimes_{i=1}^n \mu_i$ and $\forw{\kappa}(m, \mu)=\nu$, then $\nu = \otimes_{i=1}^n \nu_i$ where 
 for $k\in E$
\[	 \nu_i(\{k\}) = \mu (M_i\bigcirc K_i(x)) e_k, \] with $e_k$ the $k$-th standard basis-vector in $\RR^R$.

\item The weight at $x$ is given by \[ w(x) = \frac{(\kappa g)(x)}{(\tilde \kappa g)(x)} = \prod_{i=1}^n \frac{\langle K_i(x) \bs{g}_i\rangle_{x_i}}{\langle\tilde K_i \bs{g}_i\rangle_{x_i}}. \]

\item Initialisation at the leaf: for observation $x_{M+1}\in E$ set $\bs{g}_i = e_{\langle x_{N+1}\rangle_i}$ for $i \in \{1,\ldots, n\}$. 
\end{enumerate}
\end{thm}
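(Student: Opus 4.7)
The plan is to verify all six claims by unfolding the general definitions of the forward and backward maps against the specific product structures assumed here: $h = \prod_{i=1}^n h_i$ where $h_i$ depends only on the $i$-th coordinate, $\kappa(x,\cdot) = \bigotimes_{i=1}^n K_i(x)(x_i,\cdot)$ (conditional independence given $x$, but $K_i$ may depend on all of $x$), and the simpler $\tilde\kappa(x,\cdot) = \bigotimes_{i=1}^n \tilde K_i(x_i,\cdot)$ (genuinely independent particles). Most parts are then purely algebraic identities.

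For (i) and (ii), I would just compute $(\kappa h)(x) = \sum_y \prod_i K_i(x)_{x_i,y_i} h_i(y_i) = \prod_i (K_i(x)\bs h_i)_{x_i}$ by Fubini on a finite sum, and likewise for $\tilde\kappa$; the crucial difference is that in (ii) the factor $(\tilde K_i\bs h_i)_{x_i}$ depends on $x$ only through $x_i$, so $\tilde\kappa h$ genuinely lies in the parametric family $U(\cdot)$ while $\kappa h$ only does so in an $x$-dependent sense. Claim (iii) is then immediate: substituting these into the definition $m(x,y) = h(y)/(\tilde\kappa h)(x)$ from Definition 5.4 gives $m(x,y) = \prod_i h_i(y_i)/(\tilde K_i\bs h_i)_{x_i} = \prod_i M_i(x_i,y_i)$, so $m$ factors and the representative tuple is $(M_1,\ldots,M_n)$.

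For (v) the weight $w(x) = (\kappa h)(x)/(\tilde\kappa h)(x)$ is obtained by dividing the expressions from (i) and (ii) coordinatewise, giving the stated product $\prod_i \langle K_i(x)\bs h_i\rangle_{x_i}/\langle \tilde K_i \bs h_i\rangle_{x_i}$. For (vi), I would specialise the general rule $h_v(x) = p_{\pa(v)\pf v}(x;x_v)$ to the present setting where the observation transition is the identity on each particle's coordinate; the resulting $h$ at the leaf is a product of indicator functions of the observed values, which as an element of the parametric family corresponds to the standard basis vectors $\bs h_i = e_{\langle x_{N+1}\rangle_i}$.

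The main obstacle, and the part I would handle last and most carefully, is (iv). Writing out $\nu(\{y\}) = \sum_x \prod_i \mu_i(x_i) M_i(x_i,y_i) K_i(x)_{x_i,y_i}$ shows that because $K_i(x)$ depends on all of $x$, the sum does not factor across $i$, so $\nu$ is generally \emph{not} a product measure. This is precisely the collider obstruction discussed in Section~\ref{sec: computational}. I would therefore interpret the claim as describing the one-step forward push of a single weighted sample, as in Definition~\ref{def: forwstar}/\ref{defn: forwcirc}: with $\mu = \varpi\delta_x$ concentrated at a realised configuration $x$, the push $\nu_i$ on the $i$-th coordinate does factor since $K_i(x)$ is then a fixed matrix, yielding $\nu_i(\{k\}) \propto ((M_i\odot K_i(x))e_k)_{x_i}$, with the full joint forward sample obtained by drawing each coordinate independently conditional on the current configuration. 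The product-form assertion for $\nu$ then holds in this pointwise-sampling sense rather than as a statement about push-forwards of product measures, and is the form actually needed for the forward guiding algorithm.
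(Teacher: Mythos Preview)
Your proposal is correct and is essentially what the paper intends: the theorem is stated without proof (the preceding single-particle theorem is introduced with ``easily proved'' and this one is not proved either), so unfolding the definitions of $\kappa h$, $\tilde\kappa h$, the message, the weight, and the forward map against the product structure is precisely the expected verification.

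Your treatment of (iv) is in fact more careful than the paper's own statement. The displayed formula $\nu_i(\{k\}) = \mu (M_i\bigcirc K_i(x)) e_k$ contains an unbound $x$ and writes $\mu$ where one would expect $\mu_i$, which confirms your reading that the intended setting is $\mu = \delta_x$ for a realised configuration $x$; in that case the forward push factors coordinate-wise because each $K_i(x)$ is a fixed $R\times R$ matrix, and this is exactly what is used in the forward-sampling algorithm (cf.\ Definition~\ref{defn: forwcirc} and the numerical example in Section~\ref{sec:numerical_examples}). Your observation that for a genuine product measure $\mu$ the push-forward need not factor because $K_i(x)$ depends on all of $x$ is correct and is the collider phenomenon the paper discusses elsewhere; it is not a gap in your argument but a clarification of what the paper's (iv) can mean.
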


\subsection{Backward diagonalisation}\label{sec:backward diagonlisation}
The example of the previous section shows a generic way to deal with interacting particles with a pattern as shown in Figure \ref{fig:dag sir}. 
Here, conditional on the state of all particles at a particular ``time'', all particles transition independently, though with transition probabilities that may depend on the state of {\it all} particles. The backward filtering is however done on $n$ separate line graphs, thereby fully bypassing the need of a tractable fusion step. We call this backward diagonalisation. 

In case of a finite-state space, the fusion step can be done, but the computational cost is exponentially increasing with the number of time-steps.
The next section provides an example where the fusion step is not tractable at all, but backward diagonalisation ensures it is not needed.

\begin{figure}
\begin{center}
\begin{tikzpicture}[style={scale=0.5}]
	\tikzstyle{empty}=[fill=white, draw=black, shape=circle,inner sep=1pt, line width=0.7pt]
	\tikzstyle{solid}=[fill=black, draw=black, shape=circle,inner sep=1pt,line width=0.7pt]
	\begin{pgfonlayer}{nodelayer}
		\node [style=solid,label={$1(0)$}] (0) at (-5, 5) {};
		\node [style=solid,label={$1(1)$}] (1) at (-3, 5) {};
		\node [style=solid,label={$1(3)$}] (2) at (-1, 5) {};
		\node [style=solid,label={$2(0)$}] (3) at (-5, 3) {};
		\node [style=solid] (4) at (-3, 3) {};
		\node [style=solid] (5) at (-1, 3) {};
		\node [style=solid] (6) at (-5, 1) {};
		\node [style=solid] (7) at (-3, 1) {};
		\node [style=solid] (8) at (-1, 1) {};
		\node [style=solid,label={$n(0)$}] (9) at (-5, -5) {};
		\node [style=solid] (10) at (-3, -5) {};
		\node [style=solid] (11) at (-1, -5) {};
		\node [style=none] (12) at (-5, -1) {};
		\node [style=none] (13) at (-3, -1) {};
		\node [style=none] (14) at (-1, -1) {};
		\node [style=none] (15) at (-5, -3) {};
		\node [style=none] (16) at (-3, -3) {};
		\node [style=none] (17) at (-1, -3) {};
		\node [style=none] (18) at (1, 5) {};
		\node [style=none] (19) at (1, 3) {};
		\node [style=none] (20) at (1, 1) {};
		\node [style=none] (21) at (1, -3) {};
		\node [style=none] (22) at (1, -5) {};
		\node [style=none] (23) at (1, -1) {};
	\end{pgfonlayer}
	\begin{pgfonlayer}{edgelayer}
		\draw [style=dotted] (0) to   (4);
		\draw [style=edge] (0) to (1);
		\draw [style=edge] (1) to (2);
		\draw [style=edge] (4) to (5);
		\draw [style=dotted] (4) to  (2);
		\draw [style=dotted] (1) to  (5);
		\draw [style=dotted] (3) to  (1);
		\draw [style=edge] (3) to   (4);
		\draw [style=dotted] (6) to  (4);
		\draw [style=dotted] (3) to  (7);
		\draw [style=edge] (7) to (8);
		\draw [style=dotted] (7) to  (5);
		\draw [style=dotted] (4) to (8);
		\draw [style=edge] (9) to (10);
		\draw [style=edge] (10) to   (11);
		\draw [style=edge] (6) to (7);
		\draw [style=dotted] (9) to (16.center);
		\draw [style=dotted] (15.center) to (10);
		\draw [style=dotted] (10) to (17.center);
		\draw [style=dotted] (8) to (20.center);
		\draw [style=dotted] (5) to (20.center);
		\draw [style=dotted] (5) to (19.center);
		\draw [style=dotted] (5) to (18.center);
		\draw [style=dotted] (2) to (18.center);
		\draw [style=dotted] (2) to (19.center);
		\draw [style=dotted] (8) to (19.center);
		\draw [style=dotted] (12.center) to (7);
		\draw [style=dotted] (6) to (13.center);
		\draw [style=dotted] (13.center) to (8);
		\draw [style=dotted] (7) to (14.center);
		\draw [style=dotted] (16.center) to (11);
		\draw [style=dotted] (8) to (23.center);
		\draw [style=dotted] (11) to (21.center);
		\draw [style=dotted] (11) to (22.center);
	\end{pgfonlayer}
\end{tikzpicture}
\end{center}
\caption{DAG for the interacting particle process of section \ref{sec:discrete}. Particles move from left to right and the grey grid depicts the dependence structure. Here, parents of a vertex are defined by a local neighbourhood (in this case all particles at distance at most 1). For example particle $1$ at time $1$ (indexed by $1(1)$) has parents $1(0)$ and $2(0)$. For backward filtering, the dynamics are reduced }
\label{fig:diagonalise}\label{fig:dag sir}

\end{figure}
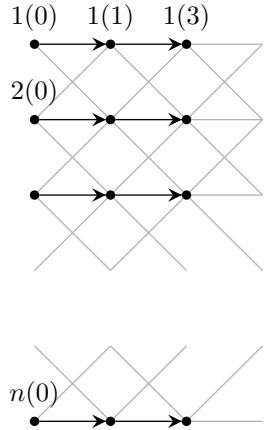

\subsection{Independent Gamma increments}\label{subsec:Gamma_example}

In this example, we consider a process with Gamma distributed increments. We consider a line graph (where $|\ch(s)| = 1$ for $s \in \cS_0$) with a single observational leaf $v$, attached to the end of the single path in $\cS_0$.
We write $Z \sim \on{Gamma}(\alpha,\beta)$ if $Z$ has density $\psi(x; \alpha, \beta) =  \beta ^{\alpha }\Gamma (\alpha )^{-1} x^{\alpha -1}e^{-\beta x} \ind_{ (0,\infty)}(x)$.  

Choose  mappings $\alpha_t$,  $\beta_t\colon (0,\infty) \to [1/C, C]$ for some $C>0$. We define a Markov process  with Gamma increments on $\cG$  by 
\[   	X_{t} - X_s \mid X_s=x \sim  \on{Gamma}(\alpha_t, \beta_t(x)), \quad \text{if $s = \pa(t)$}, \qquad  X_0 = x_0. \] 
This implies
\[ \pedge{}{t}{x}{y}  = \psi(y - x; \alpha_t, \beta_t(x)). \]
 Note that $X$ can be thought of as a time discretised version of the  SDE $\dd X_t = \beta^{-1}(X_t)\dd L_t$ driven by a Gamma  process  $(L_t)$ with scale parameter 1, observed at final time $T$.  See \cite{Belomestnyetal2019} for a continuous time perspective on this problem. A statistical application will typically involve multiple observations and henceforth multiple line graphs. Simulation on each line graph corresponds to conditional (bridge) simulation for the process $X$.

As the $h$-transform is not tractable, we introduce the process $\tilde{X}$ which is defined as  the Markov process  with Gamma increments  on $\cG$ induced by  $(\alpha_t, \beta)$,  with $\beta$ a user specified nonnegative constant. This process is tractable and is used to define the $h$-transform.

As before, we state our results using the kernels $\kappa$ and $\tilde\kappa$. Define for $A, \beta>0$
\begin{equation}\label{eq:h_gammaexample} g(y) = \psi(x_v-y; A, \beta) \end{equation}
and write $h=U(A, \beta)$.

 Define the exponentially-tilted Beta-distribution with parameters $\gamma_1, \gamma_2>0$ and $\lambda\in \RR$ as the distribution with density
\begin{equation}\label{eq:exptilted_beta} q_{\gamma_1, \gamma_2, \lambda}(z) \propto z^{\gamma_1-1} (1-z)^{\gamma_2-1} e^{-\lambda z} \ind_{(0,1)}(z). \end{equation}
We denote this distribution by $\mathrm{ExpBeta}(\gamma_1, \gamma_2, \lambda)$. Sampling from this distribution can be accomplished for example  using rejection sampling, with importance sampling distribution Beta($\gamma_1, \gamma_2$).

\begin{thm}\label{lem:gamma_incr_htilde_update}
Let $\kappa(x,\dd y) = \psi(y-x; \alpha, \beta(x))\dd y $ and $\tilde\kappa(x,\dd y) = \psi(y-x; \alpha, \beta)\dd y $.  Assume that  $g=U(A, \beta)$. 
\begin{enumerate}[label=(\roman*)]
\item Pullback for $\kappa$: 
\[ (\kappa g)(x) = U(A+\alpha, \beta)(x)  \left(\frac{\beta(x)}{\beta}\right)^{\alpha} \: \EE\, e^{-\xi(x) Z}, \]
where $Z \sim \mathrm{Beta}(\alpha, A)$ and $\xi(x)=(\beta(x)-\beta)(x_v-x)$. 

\item Pullback for $\tilde \kappa$: $(\tilde \kappa g)(x)=U(A+\alpha, \beta)(x)$.

\item Forward map: if $\mu = \delta_x$, then a draw $y$ from $\forw{\kappa}(m, \mu)$ can be generated by first drawing $z \sim \mathrm{ExpBeta}(\alpha, A, \xi(x))$ and then setting $y = x+ z(x_v-x)$. 
\item The weight at $x$ is given by  \[ w(x) = \frac{(\kappa g)(x)}{(\tilde\kappa g)(x)} =  \left(\frac{\beta(x)}{\beta}\right)^{\alpha} \EE\, e^{-\xi(x) Z}\]
 where $Z \sim \mathrm{Beta}(\alpha, A)$
\end{enumerate}
\end{thm}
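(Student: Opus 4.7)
The strategy is direct computation based on the definitions of $\kappa$, $\tilde\kappa$, $h$, $\forw_\kappa$ and the expression $w = (\kappa h)/(\tilde\kappa h)$; all four claims follow from a single Beta-type integral. I would first establish (i) by writing out $(\kappa h)(x) = \int h(y)\,\kappa(x,\dd y)$ explicitly using the Gamma densities, and then make the substitution $u = y-x$ followed by $u = T z$ with $T := x_v-x$, which maps the integration range to $(0,1)$ and isolates a factor $T^{A+\alpha-1}e^{-\beta T}$. After pulling out the prefactor $\beta^A \beta(x)^\alpha / (\Gamma(A)\Gamma(\alpha))$ the remaining integral has the form
\[
\int_0^1 z^{\alpha-1}(1-z)^{A-1} e^{-(\beta(x)-\beta)Tz}\,\dd z,
\]
which, using the Beta function identity $B(\alpha,A) = \Gamma(\alpha)\Gamma(A)/\Gamma(\alpha+A)$, is exactly $B(\alpha,A)\,\E\,e^{-\xi(x) Z}$ for $Z \sim \mathrm{Beta}(\alpha,A)$. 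Regrouping gives $(\kappa h)(x) = U(A+\alpha,\beta)(x)\,(\beta(x)/\beta)^\alpha\,\E\,e^{-\xi(x)Z}$, as claimed.

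For (ii), the same calculation with $\beta(x) \equiv \beta$ forces $\xi(x) \equiv 0$ and the Beta expectation becomes $1$, so $(\tilde\kappa h)(x) = U(A+\alpha,\beta)(x)$ immediately. For (iv), the quotient $(\kappa h)/(\tilde\kappa h)$ then yields the stated weight $w(x) = (\beta(x)/\beta)^\alpha\,\E\,e^{-\xi(x)Z}$ with no further work.

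For (iii), I would use definition~\ref{def:forwardmap}: starting from $\mu = \delta_x$, the forward map produces the (unnormalised) density on $y$ proportional to $h(y)\,\pedge{}{}{x}{y} = \psi(x_v - y;A,\beta)\,\psi(y-x;\alpha,\beta(x))$, supported on $(x,x_v)$. The key step is to pass to the normalised coordinate $z = (y-x)/(x_v-x) \in (0,1)$; the Jacobian cancels, the exponentials combine as $e^{-\beta T(1-z)}e^{-\beta(x) T z} = e^{-\beta T}e^{-\xi(x)z}$, and the powers of $T$ drop into the normalising constant, leaving exactly the density \eqref{eq:exptilted_beta} of $\mathrm{ExpBeta}(\alpha,A,\xi(x))$ (after absorbing $e^{-\beta T}$ into the constant). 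Undoing the substitution gives $y = x + z(x_v - x)$.

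There is no real obstacle here beyond bookkeeping of constants; the only thing to be careful about is the orientation of the exponent, ensuring that the sign of $\xi(x) = (\beta(x)-\beta)(x_v - x)$ matches the convention used in the definition of $\mathrm{ExpBeta}$ after both exponential factors $e^{-\beta T(1-z)}$ (from $h$) and $e^{-\beta(x)Tz}$ (from $\kappa$) are combined. Once the substitutions are performed cleanly the four claims fall out together from the same underlying integral identity.
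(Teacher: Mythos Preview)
Your proposal is correct and follows essentially the same route as the paper: compute $(\kappa h)(x)$ by writing out the two Gamma densities, substitute $y = x + z(x_v - x)$ to land on a Beta-type integral $\int_0^1 z^{\alpha-1}(1-z)^{A-1}e^{-\xi(x)z}\,\dd z$, recognise the prefactor as $U(A+\alpha,\beta)(x)(\beta(x)/\beta)^\alpha$ times $B(\alpha,A)$, and then read off (ii)--(iv) as immediate specialisations or quotients of this single computation. The paper does the substitution in one step rather than your two ($u=y-x$, then $u=Tz$), but the argument is identical.
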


\begin{proof} See Appendix \ref{sec:proofs}.\end{proof}

Note that the expression for the weight immediately suggests a method to estimate $w(x)$ unbiasedly. 
The setting is restricted to a line graph, as  fusion of $g_1$ and $g_2$ of the form \eqref{eq:h_gammaexample} does not  lead to a fused function of the same form.

The model can be extended to $n$ Markov processes with Gamma increments, which evolve conditionally independent, where it is assumed that the forward evolution consists of composing kernels
 \[ \kappa(\bs{x}, \dd \bs{y})  = \prod_{i=1}^n  \psi(\bs{y}_i - \bs{x}_i; \alpha, \beta(\bs{x})) \dd \bs{y}. \]	
 The backward kernel $\tilde \kappa$ is then taken to be the same, with $\beta(\bs{x})$ replaced with $\beta$. This is an instance of backward diagonalisation.

\begin{rem}
The setting of this section can be generalised: suppose for $s=pa(t)$ that 
\[
X_t - X_s \mid X_s = x \sim \cD(\theta_t(x)),
\]
with $\cD(\th)$ a distribution with parameter $\th$ supported on $(0,\infty)$, say with density $\psi(\cdot; \theta)$.

Tracing back the proof, the derivation of $\kappa h$ shows that if $\kappa(x,\dd y) = \psi(y-x; \theta)\dd y$, then backward filtering is tractable if
\[ (\kappa g)(x) = \int \psi(x_v-y; \theta) \psi(y-x; \theta(x)) \dd y = \int \psi(x_v-x-z; \theta) \psi(z;\theta(x)) \dd z \]
can be written as $c \psi(x_v-x; \tilde\theta(x))$ for some $\tilde\theta(x)$. Hence, the results can be extended to the case where  increments are closed under convolution. The Gamma distribution has this property,  but so does the Inverse Gaussian $\on{IG}(\mu, 2\mu^2)$-distribution for example.
	
\end{rem}

\section{Examples of  continuous-time guided processes}\label{sec:examples_continous}

\subsection{Stochastic differential equations}
Suppose the Markov process $X$ is defined as a solution to the stochastic differential equation (SDE) \[\dd X_t = b(t,X_t) \dd t + \sigma(t,X_t) \dd W_t.\] 
Here $\sigma$ is possibly non-constant, as is often the case for physically meaningful local noise structure, for example for those obtained through the stochastically constrained variational principle, Cf.\ \cite{holm2015variational}.

Assume we take $g$ derived from the simpler SDE  $\dd \tilde X_t = \tilde{b}(t,\tilde X_t) \dd t + \tilde{\sigma}(t,\tilde{X}_t) \dd W_t$. In case the drift is linear and diffusivity only time-dependent, then $g$ is proportional to a Gaussian density and can be represented as $g(t,x) = \exp\left(c(t) + F(t)^\T x -\frac12 x^\prime H(t) x\right)$, for scalar-valued $c$, vector-valued $F$ and matrix-valued $H$ (just as in \eqref{eq:gaussian_h}). In this case the term in the exponent of the weight in Theorem \ref{thm:continuous_time_weight} satisfies
\[  \frac{({\cL}-\tilde{{\cL}})  g}{g}  = \sum_i (b_i-\tilde{b}_i) \frac{\partial_i  g}{g} + \frac12 \sum_{i,j} (a_{ij}-\tilde{a}_{ij}) \frac{\partial^2_{ij}  g}{g}, \]
where we have omitted arguments $(s,x)$ from the functions and $\partial_i$ denotes the partial derivative with respect to $x_i$. If we set $r_i = (\partial_i  g)/g$, then $(\partial_{ij}  g)/  g = \partial_j  r_i +  r_i  r_j$ and therefore
\[ \frac{({\cL}-\tilde{{\cL}})  g}{g} = \sum_i (b_i-\tilde{b}_i) r_i + \frac12 \sum_{i,j} (a_{ij}-\tilde{a}_{ij}) \left( \partial_j  r_i +  r_i  r_j\right). \]
This expression coincides with the expression for the likelihood given in Proposition 1 of \cite{schauer2017guided}, but the proof given here is much shorter. 

In case the process $\tilde X$ defines a linear SDE, then solving the Kolmogorov backwards equation on $[0,T]$ reduces to solving a system of ordinary differential equations. Theorem 2.5 in \cite{mider2021continuous} gives these ODEs in their ``information filter'' form. The output of these equations is fully compatible with the form of $g$ in Subsection \ref{subsec:gauss}. In particular, fusion is tractable and our approach enables to simulate guided processes for  conditioned  branching diffusions. 

The corresponding rules for
 \texttt{Mitosis.jl} are provided by \texttt{MitosisStochasticDiffEq.jl} using solvers from Julia's SciML (\url{sciml.ai}). ecosystem for Scientific Machine Learning \citep{rackauckas2017adaptive}.

 \subsection{First hitting time of a diffusion}
Doob's $h$-transforms are not restricted to  conditioning on  the value of a process at a fixed time. Rather, they can generally be applied for different types of conditioning such as for example conditioning on the first hitting time of a set. Accordingly,  guided processes can be used for general conditionings. 

As a concrete example, suppose that over an edge a one-dimensional diffusion process instead of evolving for a fix time, evolves until it first hits level $v$, $v$ being considered fixed and the hitting time is observed.
Suppose the dynamics of the diffusion are governed by the SDE $\dd X_t = b(X_t) \dd t + \dd W_t$. Let $\tau_v$ denote the first hitting time of $v$. Now assume existence of the density $p$ such that  $\P_{s,x}(\tau_v \in \dd T)= p(s,x; T)  \dd T$. In general $p$ is intractable, though if  $b\equiv 0$ then \[
g(t, x) := p(t,x; T)=  \frac{|v-x|}{2\pi (T-t)^3}e^{-(v-x)^2/(2(T-t))}
\] 
which can be used to define the guided process.
It is easily verified that $h$ is space-time harmonic, for $x\neq v$
\[
\pder{t}  g(t, x) + \frac12 \pder[^2]{x^2} g(t, x) = 0.
\]
Furthermore
$\nabla \log  g(t, x) =  (v-x)/(T-t) + 1/(x-v)$,
which implies 
\[
\cL^\circ f(x) =  \left(b(x) + \frac{v-x}{T-t} + \frac{1}{x-v}\right)f'(x) +  \frac12 f''(x).
\]
Note that in case $b\equiv 0$, this is the generator of a (shifted and scaled) Bessel bridge.

In the larger picture of continuous processes embedded into graphical models, the observation of such a hitting time may
correspond to an ``observation operation'' (a leaf in the DAG) with transition density $p(s, x; T)$.

\subsection{Continuous time Markov chains  on a countable set}
Let $X$ denote a continuous time Markov process taking values in  the countable  set $E$.  Let $\cQ = (q(x,y),\, x, y \in E)$ be a $Q$-matrix, i.e.\ its elements $q(x,y)$ satisfy $q(x,y) \ge 0$ for all $x\neq y$ and $\sum_y q(x,y) =0$. Define $c(x)=-q(x,x)$. If $\sup_x c(x)<\infty$, then $\cQ$ uniquely defines a continuous time Markov chain with values in $E$. Its transition probabilities $p_t(x,y)$ are continuously differentiable in $t$ for all $x$ and $y$ and satisfy Kolmogorov's backward equations:
\[ \frac{\dd}{\dd t} p_t(x,y) = \sum_x q(x,z) p_t(z,y) \]
(cf.\ \cite{Liggett2010}, Chapter 2, in particular Corollary 2.34). The infinitesimal generator  acts upon functions $f\colon E \to \RR$ by
\begin{equation}\label{eq:genCTTMchain} \scr{L} f(x) = \sum_{y \in E} q(x,y) \left(f(y) - f(x)\right) = \sum_{y\in E} q(x,y) f(y),\qquad t\in [0,\infty), \quad x \in E. \end{equation}
In the specific setting where $E$ is finite with states labeled by $1, 2, \ldots, R$, $\cQ$ is an $R\times R$-matrix. Then, by evaluating $\scr{L}f(x)$ for each $x\in E$ we can identify  $\scr{L} f$ by $\cQ \bs{f}$,
where $\bs{f}$ is the column vector with $i$-th element $\bs{f}_i =f(i)$. 

Suppose an  $h$ transform $h\colon [0,\infty) \times E \to \RR$ is specified. It induces a guided process $X^\circ$ for which the generator of the space-time process can be derived using Equation \eqref{eq:Lcirc} and the preceding display. Some simple calculations reveal that for $f\colon [0,\infty) \times E \to \RR$
\[  h(t,x) (\cA^\circ f)(t,x) = \sum_{y \in E} q(x,y) \left( f(t,y)-f(t,x)\right)  h(t,y) +  h(t,x) \frac{\partial f}{\partial t}(t,x) \]
This implies that the generator of $X^\circ$ acts upon functions $f\colon [0,\infty) \times  E \to \RR$ as
\begin{equation}\label{matrixcirc}
(\cL^\circ_t f)(x)= \sum_{y \in E} q(x,y) (f(y)-f(x))  \frac{h(t, y)}{h(t,x)}.
\end{equation}

Therefore, comparing with \eqref{eq:genCTTMchain}, we see that $X^\circ$ is a  time-inhomogenuous Markov process $X^\circ$ with time dependent generator matrix $\cQ_t^\circ = (q_t^\circ(x,y),\, x, y \in E)$, where
\[ q_t^\circ(x,y) =  q(x,y)  \frac{h(t, y)}{h(t,x)} \quad \text{if} \quad y\neq x\] 
and $q_t^\circ(x,x) = 1- \sum_{y\neq x} q^\circ_t(x,y)$.

Now assume that $\tilde X$ is a  continuous time Markov process on $E$ with tractable $h$-function $g$ such that  $\tilde{\cA} g=0$, where $\tilde\cA$ is the generator of the space-time process $(t,\tilde{X}_t)$. If $\tilde\cQ = (\tilde q(x,y),\, x, y \in E)$ denotes the $Q$-matrix of $\tilde X$, then the integrand in the exponential of the weight in Theorem \ref{thm:continuous_time_weight} equals
\[ \int_0^T \frac{({\cL}-\tilde{{\cL}}) g}{g} (t, X^\circ_t) \dd t =  \int_0^T  \frac{ \sum_{y \in E} (q(X^\circ_{t},y)-\tilde{q}(X^\circ_{t},y)) g(t,y)}{g(t,X^\circ_t)} \dd t.\]

For finite-state Markov chains transition densities can be computed via $(p_t(x,y),\, x, y \in E) = e^{t \cQ}$ and this would immediately give $h$, implying no need for constructing a guided process via an approximation to Doob's $h$-transform. While strictly speaking this is true, in case $|E|$ is large, numerically approximating  the matrix exponential can be computationally demanding. In that case, by simplifying the dynamics of the Markov process, one may choose an approximating continuous time Markov process $\tilde X$ (on $E$) where $\tilde \cQ$ is block diagonal, as this simplifies evaluation of matrix exponentials.

%%%%%%%%%%%%%%%%%%%%%%%%%%%%%%%%%

%%%%%%%%%%%%%%%
\section{Sampling and statistical inference using guided processes}\label{sec:parestimation}
In this section we discuss a variety of ways in which Theorem \ref{thm:lr_xstar_xcirc_dag} can be used. 
Let $\P^\circ$ denote the law of $X^\circ_\cS$ and let $\P^\star$ denote the law of $X_{\cS} \mid X_\cV$. Then Theorem \ref{thm:lr_xstar_xcirc_dag} states
\[ \frac{\dd \P^\star}{\dd \P^\circ}(X^\circ_\cS) = \frac{g_0(x_0)}{\ev(x_0)} \Psi(X^\circ_\cS) \]
where \[ \Psi(X^\circ_\cS) = \prod_{s\in \cS}  \weight{\pa(s)}{s}{X^\circ_{\pa(s)}}  \prod_{v\in \cV} \frac{ h_{\pa(v) \pf v}(X^\circ_{\pa(v)})}{g_{\pa(v)\pf v}(X^\circ_{\pa(v)})} .
\]
Taking $f\equiv 1$ in its statement we obtain 
\begin{equation} \label{eq:evidence}\ev(x_0) = g_0(x_0) \E \Psi(X^\circ_\cS).\end{equation}
Our standing assumption is that we can define $g_{s\pf t}$ for all edges such that  $g_s$ can be computed efficiently for all $s\in \cS_0$. Depending on the  the dynamics of $X$ and $g$, it may then be possible to easily forward sample $X^\circ$ from root vertices towards leaf vertices (for example, in Gaussian nonlinear smoothing this is the case, cf.\ section \ref{subsec:gauss}). Then, again depending on $X$ and $g$ the weights $w_{\pf s}$  may be tractable or not. 

Note that $x_0$ may encode any unknown parameters such as the starting point of a stochastic process or a parameter $\theta$ reflecting uncertainty about the forward dynamics. Then formally, there is an edge from $x_0$ to every other vertex in the graph. In  the backward filtering step we are free to drop such edges. In that case, the evidence depends on $\theta$. While the backward filter does not return an approximation to the distribution of $\theta$ conditional on the leaf observations, the evidence can still be computed via \eqref{eq:evidence} and used to infer the parameter.

 There are links with a variety of  popular methods from computational statistics:

\subsection{Importance sampling}  Suppose interest lies in $I:=\E \phi(X^\star)$. We simulate independent copies $\{X^{\circ,i},\, i =1,\ldots B\}$ under $\P^\circ$. Next, define the importance sampling estimator 
\[ \hat{I}:= \sum_{i=1}^B w_i \phi\left(X^{\circ,i}\right), \qquad w_i = \frac{\Psi(X^{\circ,i})}{\sum_{i=1}^B \Psi(X^{\circ,i})}. \]
Note that $w_i$ does not depend on $\ev(x_0)$ and the estimator requires that the weights can be computed efficiently (if this is not the case, this can however be  addressed by  Pseudo-Marginal Metropolis-Hastings, see below). 

\subsection{Variational inference}	
If $g$ is indexed by a parameter $\eta \in D$, then it induces a  family of laws  
	 $\cP= \{\P^{\circ (\eta)}, \eta \in D\}$ which can be used as a  variational class  in variational inference. 
That is, we find an element in $\cP$ that  approximates $\P^\star$ by the information projection (an early reference in the context of guided processes is \cite{schauer2017guided}). More precisely, we aim to find $\argmin_{\eta \in D} \on{KL}(\P^{\circ (\eta)}, \P^\star)$, $\on{KL}$ denoting Kullback-Leibler divergence. The minimisation can be 
	done using stochastic gradient descent, which requires estimating $\nabla_\eta \on{KL}(\P^{\circ (\eta)}, \P^\star)$.
Typically, the proposal process can be reparametrised as 
\begin{equation}\label{eq:Xnormalisingflow}
X^{\circ(\eta)} = G(\eta, Z)
\end{equation}
for a deterministic map and a parameter-free innovation process $Z$ (for example white noise), say with law $\Q$. This enables variational inference with the ``reparametrisation trick'' \citep{kingma2013autoencoding}.  More precisely, 
define $\Q^{\star}$ by
\[
\frac{\dd\Q^{\star}}{\dd \Q}(Z) = \frac{\dd\P^\star}{\dd \P^\circ} (G(\eta,Z)).
\]
By the transformation formula, samples $Z \sim \Q^{\star}$ give $G(\eta,Z) \sim \P^\star$.  Since
\[ \nabla_\eta \E^{\circ(\eta)} \log \frac{\dd\P^\star}{\dd \P^{\circ(\eta)}}(X^{\circ(\eta)} ) =\nabla_\eta \E^{\Q} \log \frac{\dd\P^\star}{\dd \P^{\circ(\eta)}}(G(\eta,Z)) =  
\E^{\Q} \nabla_\eta \log \frac{\dd\P^\star}{\dd \P^{\circ(\eta)}}(G(\eta,Z))\]
an estimator for this gradient can be obtained by sampling from $\Q$ (we implicitly assume expectation and differentiation can be interchanged). 

Note that the construction of $X^{\circ(\eta)}$ in \eqref{eq:Xnormalisingflow} is a normalising flow, see for instance \cite{pmlr-v37-rezende15} and  \cite{Kobyzev2020}. Alternatively, it can be seen as a noncentred parametrisation \citep{PapasRobertsSkoeld} or 
randomness pushback \citep{Fritz2020}.

\subsection{Markov chain Monte Carlo methods}\label{subsec:mcmc}
Suppose we can reparametrise $X^\circ = g(Z)$ where $Z$ is standard normal. Then a Metropolis-Hastings algorithm for sampling from $\P^\star$ can be constructed by  sampling $W \sim N(0,I)$ (independently from $Z$), choosing $\alpha \in [0,1)$  and proposing
\[ Z_{\rm new} = \alpha Z +\sqrt{1-\alpha^2} W. \]
Next, $Z_{\rm new}$ is accepted with  probability \[ 1\wedge  \frac{\Psi(G(Z_{\rm new}))}{\Psi(G(Z))} \] and upon acceptance $X^\circ$ is updated to  $G(Z_{\rm new})$. The update step for $Z_{\rm new}$ is often referred to as a preconditioned Crank-Nicolson (pCN) step.

In evaluating  $\Psi(X^\circ_{\cS})$, the product of weights, $\prod_{s\in \cS}  \weight{}{s}{X^\circ_{\pa(s)}}$ may be intractable. The Pseudo-Marginal Metropolis-Hastings (PMMH) algorithm \citep{andrieu2009pseudo} is based on the observation that if $\Psi(X^\circ_{\cS})$ is replaced with an unbiased estimator, the Metropolis-Hastings algorithm still targets the exact target distribution. 
Here, this algorithm can be used upon realising that  $\Psi(X^\circ_\cS) $ can be unbiasedly estimated by 
 \[  
   \prod_{s\in \cS} 
\frac{g_s(X_s)}{\prod_{u\in \pa(s)} g_{u\pf s}(X^\circ_u) } 
   \prod_{v\in \cV} \frac{ h_{\pa(v) \pf v}(X^\circ_{\pa(v)})}{ g_{\pa(v)\pf v}(X^\circ_{\pa(v)})} .
\]
where $X_s \sim \scr{L}(X_s \mid X_{\pa(s)} = X^\circ_{\pa(s)})$.

\subsection{Hamiltonian Monte Carlo}
This is also helpful for derivative-based samplers such as Hamiltonian Monte Carlo.
Consider the situation where $X^\circ = G(Z)$, for $Z \sim \Q=\uN(0,\uI)$.%

The Hamiltonian to be used in a Hamiltonian Monte Carlo sampler targeting $Z^\star$ is  
\[
 \uH( \mathsf z, \mathsf p)= -\log \frac{\dd\Q^{\star}}{\dd \Q}\frac{\dd \Q}{\dd \lambda}( \mathsf z)  + \frac12 \| \mathsf  p\|,
\]
where $ \mathsf p$ is the momentum vector and $\lambda$ denotes the Lebesgue measure. 
Thus with 
\[
\psi_s( \mathsf  z) = -\pder{ \mathsf  z_s} \log \frac{\dd\Q^{\star}}{\dd \Q}\frac{\dd \Q}{\dd \lambda}( \mathsf  z) =  - \sum_{ t \in \cS} \pder{ \mathsf z_s} \log \weight{}{t}{G( \mathsf z_{\pa(t)}}  -  \mathsf z_s.
\]
the coordinate functions of the gradient of the negative score (the potential energy) and with $\lambda$ the Lebesgue measure, the Hamiltonian dynamics are given by
\[
\frac{\partial  \mathsf p_s}{\partial t} = -\frac{ \partial  \uH( \mathsf z,  \mathsf p)}{\partial  \mathsf z_s} = -\psi_s( \mathsf z) , \quad \frac{\partial  \mathsf z_s}{\partial t}  = \frac{ \partial  \uH( \mathsf z,  \mathsf p)}{\partial  \mathsf p_s} =  \mathsf p_s.
\]
Thus a Hamiltonian Monte Carlo method to sample $Z^\star$ can be used and samples of the target $X^\star$ can then obtained via transformation $G(Z^\star)$.

%%%%%%%%%%%%%%%%%%%%%%%

\bigskip

{\bfseries Acknowledgement:}   
Frank Schäfer implemented Mitosis rules for stochastic differential equation based on earlier code by Marcin Mider.  We thank Richard Kraaij for discussions on the topic of $h$-transforms in the early stage of this work. 
We thank for valuable input Marc Corstanje, Keno Fischer, Philipp Gabler, Evan Patterson and Chad Scherrer.

\small

\nocite{rackauckas2020universal}
\nocite{innes2018dont}
\nocite{lahoz2010data}
\nocite{carlson2020montecarlomeasurementsjl}

\bibliographystyle{harry}
\bibliography{literature}

%%%%%%%%%%%%%%%%%%%%%%%%%%%%%%%%%%%%%%%%%%%%%%%%%%%%%%%%%%%%%%%%%%%%%%%%%%
  
\appendix

\section{Summary of related work}\label{sec:related work}
In case of a line graph with edges to observation leaves at each vertex (a hidden Markov model) there are two well known cases for computing the smoothing distribution: {\it (i)} if $X$ is discrete the forward-backward algorithm (\cite{Murphy2012}, section 17.4.3), {\it (ii)} for linear Gaussian systems the Kalman Smoother, also known as Rauch-Tung-Striebel smoother (\cite{Murphy2012}, section 18.3.2) for the marginal distributions or its sampling version, where samples from the smoothing distribution are obtained by the forward filtering, backward sampling algorithm, \cite{CarterKohn(1994)}.  \cite{Pearl1988} gave an extension of the forward-backward algorithm from chains to trees by an algorithm known as  ``belief propagation'' or ``sum-product message passing", either on trees or poly-trees. This algorithm  consists of two message passing phases. In the ``collect evidence'' phase, messages are sent from leaves to the root; the ``distribute evidence'' phase ensures updating of marginal probabilities or sampling joint probabilities from the root towards the leaves.
The algorithm can be applied to junction trees as well, and furthermore, the approximative loopy belief propagation applies belief propagation to sub-trees of the graph. A review is given in \cite{Jordan2004}. 

\cite{Chou1994} extended the classical Kalman smoothing algorithm for linear Gaussian systems to dyadic trees by using a fine-to-coarse filtering sweep followed by a course-to fine smoothing sweep. This setting arises as a special case of our framework. Extensions of filtering on  Triplet Markov Trees and pairwise Markov trees are dealt with in \cite{BardelDesbouvries2012}  and \cite{DesbouvriesLecomtePieczynski} respectively. 

Particle filters have been employed in related settings, see \cite{doucet2018sequential} for an overview, but the resampling operations of particle filters 
result in the simulated likelihood function being non-differentiable in the parameters, which unlike our approach is an obstacle to gradient based inference. 
\cite{briers2010smoothing} consider particle methods for  state-space models using (an approximation to) the backward-information filter. Twisted particle samplers (\cite{Guarniero2017}) and controlled Sequential Monte Carlo (SMC) (\cite{Heng2020}) essentially use particle methods to find an optimal control policy to approximate the backward-information filter. The approximate backward filtering step in this paper uses a similar idea, formulated in a broader context than particle filtering and discrete-time models. 

To easily compare to \cite{Heng2020} (shortly referred to as [H] here), we assist in bridging notation used in that paper and ours. The optimal Backward Information Filter (Cf.\ Section \ref{subsec: doob}), which we denote by $h_t$ at node $t$, is denoted $\psi^\star_t$ in [H]. Approximations to $h_t$ are specified in our setup by providing $g_{s\pf t}$ over all edges (leading to $\{g_s,\, s\in \cS\}$ by fusion). An a state-space model, those approximations  are  denoted by $\psi_t$ in [H].  The set of $\psi_t$ functions over all non-leaf vertices is then called a policy. For a state-space model it is shown how  $g_{\pa(v)\pf v}$ can be chosen such that the smoothing distribution is exact (this can also easily be deduced from Theorem \ref{thm:lr_xstar_xcirc_dag}). While not the focus of this paper, the idea of policy refinement outlined in [H] is also applicable to our setup.

For variational inference, \cite{ambrogioni2021automatic} propose an approach which, similar to ours, preserves the Markovian structure of the target  by learning local approximations to the conditional dynamics.

%%%%%%%%%%%%%%%%%%%%%%%
\section{Graph theory}\label{sec:notation}

Let $\cG = (\cT, \cE)$ be a directed, finite  graph without parallel edges and self-loops, with vertices  $\cT$ and edges $\cE$, where $e = (s, t)$, $s, t \in \cT$ denotes a directed edge from $s$ to $t$. 
\begin{itemize}
	\item The parent set of a vertex is the set of all vertices that feed into it: $\pa(s) =\{t\colon e(t,s) \in \cE\}$. 
	\item The children set of a vertex is the set of all vertices that feed out of it: $\ch(s)= \{t\colon e(s,t) \in \cE\}$.
\item Define $s \before t$ recursively by $s = t$ or $s \before a$ with $(s, a) \in \cE$ for some $a$. That is, $s = t$ or there is a directed path from $s$ to $t$. 
	\item Ancestors of a vertex are its parent, grand-parents, etc.: $\anc(t) = \{s \ne t\colon s \before t\}$.
	 Thus, for fixed $t$, $\{s\colon s \before t\} = \anc(t)\cup\{t\}$. 

\item A leaf (or sink) is a vertex with no children. The set of leaves is denoted by $\cV$. 
\item For each vertex $t$, we denote by $\cV_t = \{v \in \cV, t \before v\}$ the leaves which are descendants of $t$ (including $t$, if $t$ is a leaf).
\item A source is a vertex with no parents. 
	\end{itemize}

%%%%%%%%%%%%%%%%%%

\section{Proofs}\label{sec:proofs}

\begin{proof}[Proof of Proposition \ref{prop: kl_backward}]
We aim to find $g_i$ minimising $-\sum_{i=1}^d \int \pi(x) g(x)  \log g_i(x_i) \dd x$. Writing $\log g_i = \tilde g_i$ and introducing Lagrange-multiplier we consider
\[ \cL(\tilde g_1,\ldots, \tilde g_d, \lambda_1,\ldots, \lambda_d) = -\sum_{i=1}^d \pi(x) g(x) \tilde g_i(x_i) \dd x +\sum_{i=1}^d \lambda_i \left(\int e^{\tilde g_i(x_i)} \pi_i(x_i)-c_i\right). \]
Now consider $\cL$ with its $j$-th argument taken to be $\tilde g_j(x_j) + \epsilon \delta(x_j)$, subtract $\cL$ and collect terms of $\scr{O}(\epsilon)$ to get
\[ -\int \pi(x) g(x) \delta(x_j) \dd x + \lambda_j \int g_j(x_j) \pi_j(x_j) \delta(x_j) \dd x_j. \]
Equating this to zero gives
\[ \int \delta(x_j) \left( \lambda_j g_j(x_j) \pi_j(x_j)  -\int \pi(x) g(x) \dd x_{-j}\right) \dd x_j = 0\]
which implies 
\[  \lambda_j g_j(x_j) \pi_j(x_j)  =\int \pi(x) g(x) \dd x_{-j}. \]
Integrating over $x_j$ gives $\lambda_j c_j = 1$ from which the result easily follows.
\end{proof}

\begin{proof}[Proof of Theorem \ref{thm:gaussianformulas}]
First note that 
\begin{align*} (\kappa g)(x) &= \int g(y) \kappa(x,\dd y) =\int \varpi(c,F,H)\phi(y; H^{-1}F, H^{-1}) \phi(y; \mu(x), Q(x)) \dd y
 \\ &= \varpi(c, F, H) \phi(H^{-1}F; \mu(x), Q(x) + H^{-1}).
\end{align*}
By using the specific form of $\mu(x)$ and $Q(x)$ for the kernel $\tilde\kappa$ the expression for the weight follows.

To find the parametrisation of $\tilde\kappa h$, let $C=Q+H^{-1}$. We have
\begin{align*}
(\tilde \kappa g)(x) &= \varpi(c, F, H) \phi(H^{-1}F; \Phi x + \beta, Q + H^{-1}) \\ 
& =\varpi(c, F, H) (2\pi)^{-d/2} |C|^{-1/2} \\ & \quad \times \exp\left(-\frac12 x^\T \Phi^\T C^{-1} \Phi x + (H^{-1}F-\beta)^\T C^{-1} \Phi x - \frac12 (H^{-1}F-\beta)^\T C^{-1} (H^{-1}F-\beta)\right)\\  & = U(c - \log \phi^{\on{can}}(0, F, H) +\log \phi(\beta; H^{-1}F, C),  \Phi^\T C^{-1}(H^{-1}F-\beta), \Phi^\T C^{-1} \Phi) (x). 
\end{align*}
By equating this to $U(\bar c, \bar F, \bar H)$ the parametrisation can be inferred directly,
If $\Phi$ is invertible, then 
\[ \bar F^\T \bar H \bar F = (H^{-1} F -\beta)^\T C^{-1} (H^{-1} F -\beta). \]
A bit of algebra then gives the stated result.

The derivation of the parametrisation of the fusion step is trivial. 

 To derive the forward map, we write $\propto$ to denote proportionality with respect to $y$
\begin{align*} \frac{g(y) \kappa(x,\dd y)}{\dd y} &\propto \exp\left(-\frac12 y^\T H y + y^\T F\right) \exp\left(-\frac12 (y-\mu(x))^\T Q(x)^{-1} (y-\mu(x))\right) \\ &\propto \exp\left(-\frac12 y^\T (H+ Q(x)^{-1}) y+ y^\T (F+Q(x)^{-1}\mu(x))\right)\\ & \propto \phi^{\mathrm{can}}(y; F + Q(x)^{-1} \mu(x),H + Q(x)^{-1})
\end{align*}
which suffices to be shown.
\end{proof}

\begin{proof}[Proof of Theorem \ref{lem:gamma_incr_htilde_update}]
For deriving the pullback of $\kappa$, we compute
\begin{align*}
(\kappa g)(x) & = \int_x^{x_v} \psi(x_v-y; A, \beta) \psi(y-x; \alpha, \beta(x)) \dd y \\ & = \frac{\beta^A \beta(x)^\alpha}{\Gamma(A) \Gamma(\alpha)} e^{-\beta x_v +\beta(x) x} \int_x^{x_v} (x_v-y)^{A-1} (y-x)^{\alpha-1} e^{(\beta-\beta(x)) y } \dd y \\ & = 	
\frac{\beta^A \beta(x)^\alpha}{\Gamma(A) \Gamma(\alpha)} e^{-\beta (x_v-x)} (x_v-x)^{A+\alpha-1} \int_0^1 z^{\alpha-1} (1-z)^{A-1} e^{-z\xi(x)} \dd z 
\\ & = g_{A+\alpha, \beta}(x) \frac1{B(\alpha, \beta)} \int_0^1 z^{\alpha-1} (1-z)^{A-1} e^{-z\xi(x)} \dd z 
\end{align*}
where we made the substitution $y=x+z(x_v-x)$ at the third equality and $\mathrm{B}(\alpha, \beta)$-denotes the Beta-function, evaluated at $(\alpha, \beta)$.

From this expression the pullback for $\tilde\kappa$ follows directly, since  $\xi(x)=\xi$ implies $\xi(x)=0$. This also directly gives the expression for the weight. 

If $\mu=\delta_x$, then drawing from the forward measure entails drawing from a density which is proportional to $g(y) \kappa(x, \dd y)$ (proportionality with respect to $y$). The claim now follows upon inspecting the derivation of $(\kappa g)(x)$ and keeping all terms under the integral proportional to $z$.

\end{proof}

\end{document}